\documentclass[conference]{IEEEtran}
\IEEEoverridecommandlockouts
\usepackage{amsmath,amssymb,amsfonts}
\usepackage{graphicx}
\usepackage{balance}  
\usepackage{subfig}
\usepackage{float}
\usepackage{fontspec}
\usepackage{bbding}
\usepackage{bm}
\usepackage{color,xcolor}
\usepackage{algorithm}
\usepackage[noend]{algpseudocode}

\newcommand{\revision}[1]{\textcolor{black}{#1}}

\newcommand{\best}[1]{\underline{\textbf{#1}}}
\newcommand{\secondbest}[1]{\textbf{#1}}

\usepackage{xspace}
\newcommand{\kw}[1]{{\ensuremath {\mathsf{#1}}}\xspace}

\newcommand{\sysname}{\kw{NeutronRT\xspace}}

\newcommand{\RTEC}{\kw{RTEC}}
\newcommand{\ODEC}{\kw{ODEC}}
\newcommand{\RTECNS}{\kw{RTEC}-NS\xspace}
\newcommand{\RTECNSTen}{\kw{RTEC}-NS10\xspace}
\newcommand{\RTECNSFive}{\kw{RTEC}-NS5\xspace}
\newcommand{\RTECUER}{\kw{RTEC}-UER\xspace}
\newcommand{\RTECFull}{\kw{RTEC}-Full\xspace}
\newcommand{\NrtInc}{\kw{NrtInc}}
\newcommand{\NrtIncC}{\kw{NrtInc(c)}}

\newcommand{\func}[1]{\textbf{\texttt{#1}}\xspace}
\newcommand{\algcomment}[1]{\textcolor{gray}{\emph{#1}}\xspace}
\usepackage{graphics}
\usepackage{epsfig}
\usepackage{enumitem}
\usepackage{amsfonts} 
\usepackage{libertine}
\usepackage{pifont}
\newcommand{\Paragraph} [1] {\smallskip\noindent{\bf #1. }}
\usepackage{booktabs}
\usepackage{multirow}
\usepackage[misc]{ifsym} 
\usepackage{url}
\usepackage[font=small,labelfont=bf]{caption}

\newtheorem{theorem}{Theorem}

\usepackage{tcolorbox}
\usepackage{listings}
\def\BibTeX{{\rm B\kern-.05em{\sc i\kern-.025em b}\kern-.08em
    T\kern-.1667em\lower.7ex\hbox{E}\kern-.125emX}}

\begin{document}



\title{Incremental GNN Embedding Computation on Streaming Graphs}


 \author{\IEEEauthorblockN{Qiange Wang$^{1}$, Haoran Lv$^{2}$, Yanfeng Zhang$^{2}$, Weng-Fai Wong$^{1}$, Bingsheng He$^{1}$}
\IEEEauthorblockA{$^{1}$\textit{National University of Singapore}, Singapore\\$^{2}$\textit{School of Computer Science and Engineering, Northeastern University}, Shenyang, China \\
wangqiange94@gmail.com;\{2401878@stu,zhangyf@mail\}.neu.edu.cn;
 \{dscwwf,dcsheb\}@nus.edu.sg;}
}




\maketitle
\begin{abstract}

Graph Neural Network (GNN) on streaming graphs has gained increasing popularity. However, its practical deployment remains challenging, as the inference process relies on Runtime Embedding Computation (\RTEC) to capture recent graph changes. This process incurs heavyweight multi-hop graph traversal overhead, which significantly undermines computation efficiency. We observe that the intermediate results for large portions of the graph remain unchanged during graph evolution, and thus redundant computations can be effectively eliminated through carefully designed incremental methods. In this work, we propose an efficient framework for incrementalizing \RTEC on streaming graphs. The key idea is to decouple GNN computation into a set of generalized, fine-grained operators and safely reorder them, transforming the expensive full-neighbor GNN computation into a more efficient form over the affected subgraph. With this design, our framework preserves the semantics and accuracy of the original full-neighbor computation while supporting a wide range of GNN models with complex message-passing patterns. To further scale to graphs with massive historical results, we develop a GPU–CPU co-processing system that offloads embeddings to CPU memory with communication-optimized scheduling. Experiments across diverse graph sizes and GNN models show that our method reduces computation by 64\%–99\% and achieves 1.7x–145.8x speedups over existing solutions.

\end{abstract}
\section{Introduction}
\label{sec:intro}

Graph neural networks (GNNs) have gained significant popularity for their effectiveness in modeling graph-structured data~\cite{taobao_cikm_2018, pinterest_kdd_2018, NEUTRONSTAR_SIGMOD_2022, gnn_survey_arxiv2019}. However, many real-world applications involve evolving graphs, which require GNN systems to promptly update embeddings and prediction results according to graph changes~\cite{REALTIMEGNN_VLDB_2021}. For example, short-video platforms~\cite{BYTEGNN_VLDB_2022, grale_kdd_2020} aim to incorporate real-time user–video interactions into user embeddings for recommendation and content moderation, while online financial services~\cite{inferturbo_ICDE_2023, taobao_cikm_2018, meituan} analyze the latest transactions to detect money laundering and malicious accounts. 

Recently, \textbf{Runtime Embedding Computation} (\RTEC)~\cite{Grapher_WWW_2024, REALTIMEGNN_VLDB_2021, Helios_PPOPP_2025, INKSTREAM_ARXIV_2023, ripple_arxiv_2025} has emerged as a promising solution for efficiently serving GNNs on streaming graphs. By updating embeddings with pre-trained models only for affected vertices, \RTEC can rapidly incorporate structural and feature changes from the most recent graph snapshots into prediction results. Compared with widely adopted periodical retraining-and-recomputation approaches~\cite{pinterest_kdd_2018, meituan, taobao_cikm_2018, inferturbo_ICDE_2023}, it delivers high-quality inference results.

However, high-performance deployment of \RTEC remains challenging, as it relies on a heavyweight inference-time $L$-layer GNN computation. The core inefficiency arises from the neighbor explosion problem~\cite{ROC}, where updating a single vertex can trigger computation across its entire $L$-hop neighborhood, as illustrated in Figure~\ref{fig:sec2:comp_amplification}.a-c. Consequently, \RTEC still incurs substantial overhead on real-world graphs: even with as little as 0.1\% of edge updates, the cost can reach up to 80\% of full-graph recomputation. We observe that \RTEC performs substantial redundant computation on subgraphs whose results remain valid. In an $L$-layer GNN, graph updates typically propagate up to $L$ hops, forming an \textit{affected subgraph} that consists of both the update propagation paths and the final-layer affected vertices (Figure~\ref{fig:sec2:comp_amplification}.b). Naive \RTEC recomputes not only the update propagation paths (in red) but also the \textit{unaffected subgraph}, whose intermediate results remain valid within the $L$-hop neighborhood (in blue). As shown in Figure~\ref{fig:sec2:comp_amplification}.c, updating just two edges ($\langle4,2\rangle$ and $\langle4,4\rangle$) triggers recomputation of $v_2$ using all nine edges, even though the contributions of seven edges have already been incorporated into the embeddings of $v_2$ (hop 2) and $v_4$ (hop 1). This inefficiency is further amplified on real-world graphs, where the redundant computation can account for up to 95\% of the total overhead (Section \ref{sec3:analysis}-A).

\begin{figure}[!t]
	\centering
	\includegraphics[width=1.02\linewidth]{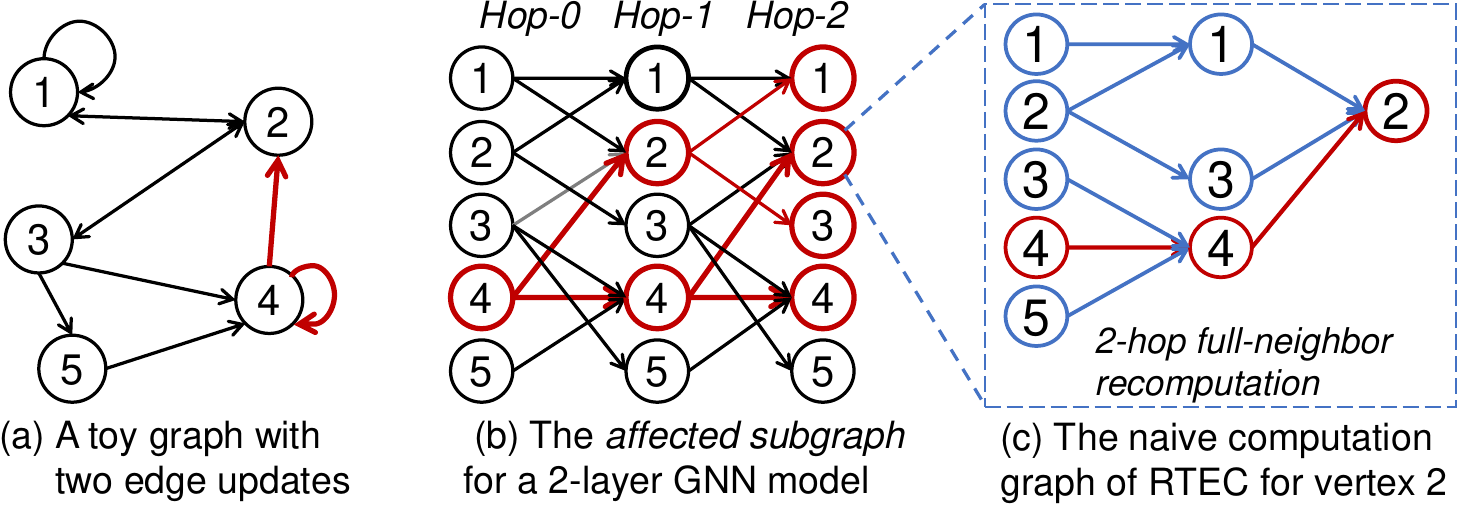}
    \vspace{-0.1in}
	\caption{\RTEC on a toy graph with edge updates $\langle4,2\rangle$ and $\langle4,4\rangle$ using a 2-layer model. Blue items in (c) indicate redundant computations on the \textit{unaffected subgraph}.}
\label{fig:sec2:comp_amplification}
    \vspace{-0.1in}
\end{figure}

To address this inefficiency, an intuitive approach is to leverage incremental processing, which eliminates redundant computation by transforming full recomputation into an equivalent yet more efficient procedure that reuses previously computed results. While such techniques have demonstrated substantial performance gains in graph processing and database analytics~\cite{gong2021ingress,vora2017kickstarter,LINVIEW_SIGMOD_2014,IVMTLF_SIGMOD_2018,pg_ivm}, extending them to GNN embedding computation remains challenging. This is because GNNs exhibit diverse and sophisticated computation patterns involving complex message-passing mechanisms and non-linear neural network operations. Traditional incremental methods are typically designed for analytical tasks with simple arithmetic operations, whose correctness assumptions do not naturally hold for GNN models, thereby rendering existing approaches unsuitable for GNN embedding computation (Section~\ref{sec2:challenge}).

In this work, we propose an efficient and generalized GNN \RTEC framework that enables incremental processing of \RTEC across diverse GNN models with theoretical accuracy guarantees. The framework finely decouples GNN computation into three components: neighbor-wise computation, aggregation of neighbor embeddings, and edge-level message computation, and executes them in a unified and safely reordered workflow. Under this workflow, each component’s output on \textit{affected subgraph} is updated incrementally by reusing results from its \textit{unaffected subgraph}. We further derive sufficient conditions under which the reordering is sound, ensuring that the final result is equivalent to that of full-neighborhood recomputation even in the presence of complex edge-level message dependencies. This design broadens applicability to GNN families previously considered incompatible, making real-time inference on streaming graphs practical and efficient, and delivering superior performance over state-of-the-art non-incremental methods \cite{Helios_PPOPP_2025,Grapher_WWW_2024}.

Incremental \RTEC requires maintaining historical results across layers~\cite{MEMFOOT_ATC_2021, GraphBOLT_EUROSYS_2019}. To efficiently support GNN \RTEC on large-scale graphs, we develop \sysname, a CPU–GPU co-processing system that offloads intermediate embeddings to CPU memory, performs computation on GPUs, and employs a communication-efficient scheduling mechanism to minimize transfer overhead. Experimental results show that \sysname can efficiently process billion-scale streaming graphs on a single NVIDIA A5000 GPU, reducing redundant computation by 64\%–99\% and achieving speedups ranging from 1.7x–145.8x over naive full-neighbor \RTEC and other non-incremental solutions, including neighbor sampling \cite{Helios_PPOPP_2025}, and directed embedding reuse \cite{Grapher_WWW_2024}.

In summary, we make the following contributions. 

\begin{itemize}[leftmargin=*]

\item We propose an general and efficient incremental \RTEC framework that transforms heavyweight full-neighbor computation into an equivalent yet more efficient form with fine-grained operator decoupling and reordering.

\item We formally establish the equivalence between incremental \RTEC and full-neighborhood computation under sufficient conditions, and demonstrate the theoretical correctness of our framework through illustrative examples.

\item We develop a GPU-CPU co-processing system that enables GPU-accelerated \RTEC on large-scale streaming graphs, achieving 1.7$\times$–145.8$\times$ speedup over state-of-the-art baselines across diverse GNN models and datasets.

\end{itemize}

\section{Background}
\label{sec:2}

\subsection{GNN Basics}
\label{sec:2:basis}
GNNs take a graph with features associated to each vertex as input, producing a representation vector for each vertex by stacking multiple \textbf{message-aggregate-update} layers:

\footnotesize
\begin{align}
\label{eq:aggr}
\textbf{h}_v^l=\textbf{UPD}\big(\textbf{AGG}(\{\textbf{h}^{l-1}_u* \textbf{MSG}(\textbf{h}^{l-1}_u, \textbf{h}^{l-1}_v)| u\in N(v)\}), \textbf{h}^{l-1}_v\big)
\end{align}
\normalsize
$\textbf{h}_{v}^l$ represents the embedding/feature of $v$ in the $l$-th layer. The \textbf{MSG} function computes the message for each edge. The \textbf{AGG} function gathers message-applied layer embeddings to produce a neighborhood representation, which is fed into the \textbf{UPD} function to calculate the embedding in the $l$-th layer. 

\Paragraph{Examples} In GCN model \cite{GCN_ICLR_2017}, the \textbf{MSG} function is the normalized degree of $<$$u,v$$>$. The \textbf{AGG} function is a \texttt{sum()}. The \textbf{UPD} function involves a {linear} transformation and a non-linear activation.

\footnotesize
\begin{align}
\label{eq:gcn}
\textbf{h}_v^l=\sigma(W^l(\sum_{_{u\in N(v)}} \frac{1}{\sqrt{d_ud_v}}\textbf{h}^{l-1}_u))
\end{align} \normalsize
Other GNN models with similar computation patterns include GraphSAGE~\cite{GRAPHSAGE_NIPS_2017}, GIN~\cite{GIN_ICLR_2019}, and CommNet~\cite{COMMNET_NIPS_2016}.

In contrast, some other models \cite{GAT_ICLR_2018,GNNGOODSURVEY_TPAMI_2024,GGCN_ICLR_2016,RGAT_ARXIV_2019} use sophisticated message functions that involves neural network and neighborhood data synchronizations. For example, the GAT model \cite{GAT_ICLR_2018} introduces self-attention mechanisms (\texttt{softmax()}) to distinguish important neighborhood. The message function is as follows:
\begin{align}\footnotesize\textstyle
\label{eq:gat:message}
\textbf{att}_{u,v}^l=\frac{\texttt{exp}(\texttt{LeakyReLU}(a^l([W^l \textbf{h}^{l-1}_v ||W^l \textbf{h}^{l-1}_u))}{\sum_{_{u\in N(v)}} \texttt{exp}(\texttt{LeakyReLU}(a^l[W^l \textbf{h}^{l-1}_v ||W^l \textbf{h}^{l-1}_u]))},
\end{align}
The \textbf{aggregate} and \textbf{update} functions in GAT involve simple \texttt{sum} aggregation and non-linear activations. %
{
\small
\begin{align}
\label{eq:gat:agg}
\textbf{h}_v^l=\sigma(\textstyle\sum_{_{u\in N(v)}} \textbf{att}_{u,v}^l W^l\textbf{h}_{u}^{l-1})
\end{align}
}
Similar models include A-GNN \cite{GNNGOODSURVEY_TPAMI_2024}, GGCN \cite{GGCN_ICLR_2016}, and RGAT\cite{RGAT_ARXIV_2019}. 

\subsection{GNN Training and Inference for Streaming Graphs}
On static graphs, GNNs are trained over multiple epochs, followed by a forward \textbf{embedding computation} pass that computes final-layer embeddings for downstream inference.



{
Streaming graphs evolve continuously through edge and vertex updates and support high-value applications such as real-time recommendation and fraud detection. These applications require timely and accurate node embeddings that reflect recent structural and feature changes~\cite{Helios_PPOPP_2025}. However, retraining models and recomputing embeddings over the entire graph for each update batch is computationally prohibitive. As a result, industrial systems often rely on periodic recomputation on snapshot graphs~\cite{meituan,grale_kdd_2020,taobao_cikm_2018}, which reduces overhead but fails to capture time-sensitive interactions, potentially leading to incorrect recommendations or classifications  for hundreds of thousands of users.



}

\section{\RTEC for Streaming Graphs}
\label{sec3}

\label{sec3:analysis}

Recent work has proposed \textbf{\texttt R}un\textbf{\texttt T}ime \textbf{\texttt E}mbedding \textbf{\texttt C}omputation (\RTEC)~\cite{Grapher_WWW_2024, REALTIMEGNN_VLDB_2021, INKSTREAM_ARXIV_2023, Helios_PPOPP_2025, ripple_arxiv_2025}, which continuously identifies and recomputes affected vertex embeddings using pre-trained models and evolving graph structures, achieving higher accuracy than periodic recomputation without model retraining (Section~\ref{sec:expr:accuracy}).

\begin{figure}[!t]
\vspace{-0.05in}
	\centering
	\includegraphics[width=1.0\linewidth]{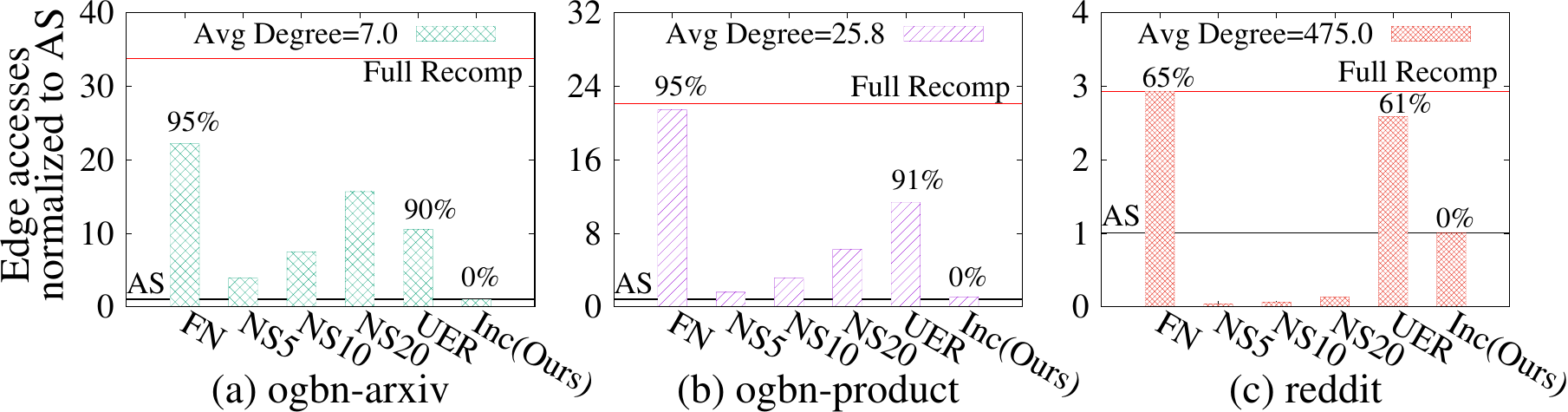}
   \vspace{-0.2in}
	\caption{
    Processed edge volume is normalized to the \textit{Affected Subgraph} (AS). FN, NS\#, and UER denote \RTEC with full-neighbor computation, neighbor sampling, and unaffected embedding reuse, respectively. The percentage of redundant computation on \textit{unaffected subgraphs} is labeled above each bar, except for NS approaches, which compute on both sampled \textit{affected} and \textit{unaffected subgraphs}.}
 	\label{fig:sec2:comp_volume}
   \vspace{-0.15in}
\end{figure}

\subsection{The Redundant Computation of \RTEC}
\label{sec2:redundant_comp}

Although \RTEC achieves high accuracy, its deployment on streaming graphs is hindered by redundant computation, i.e., repeatedly processing valid results from \textit{unaffected subgraphs} (blue region in Figure~\ref{fig:sec2:comp_amplification}.c). To quantify this overhead, we compare the number of edges processed by the \RTEC computation graph with those in the \textit{affected subgraph} (AS, red region in Figure~\ref{fig:sec2:comp_amplification}.b–c). We split the most recent 10\% of edges into 100 batches and process each batch independently. Figure~\ref{fig:sec2:comp_volume} reports the average processed-edge volume per batch, normalized to AS. Even when only 0.1\% of edges are updated, AS typically contains just 2.5\%–15\% of the original graph, while naive full-neighbor \RTEC (FN) processes 2.9$\times$–22.2$\times$ more edges than AS, approaching full-graph recomputation on large graphs. Notably, 65\%–95\% of this computation is spent on unaffected subgraphs.

\begin{figure}[!t]
	\centering
	\includegraphics[width=\linewidth]{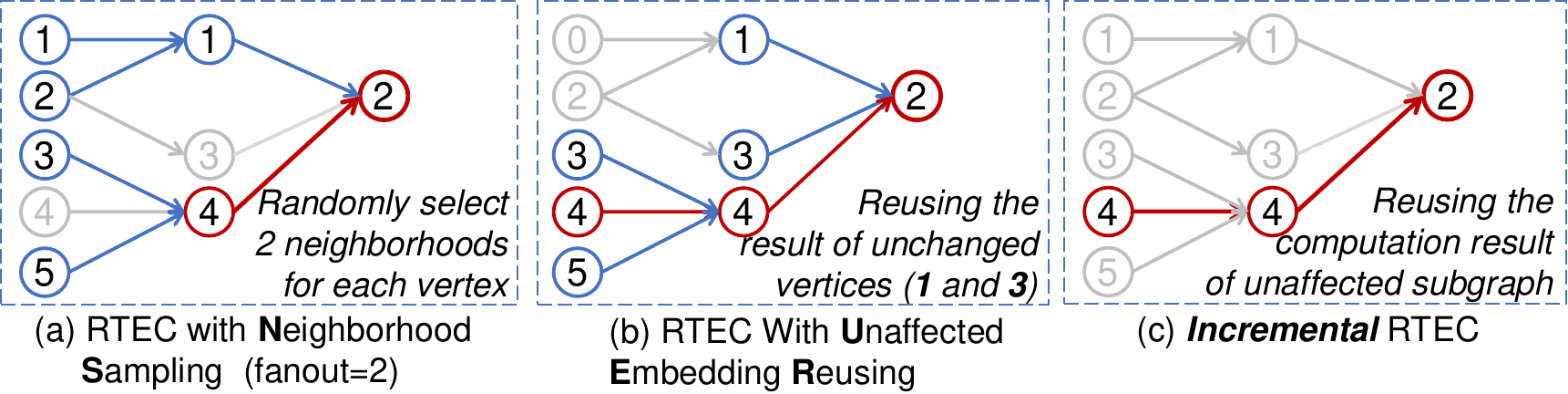}
    \vspace{-0.2in}
	\caption{An illustration of \RTECNS and \RTECUER and \RTEC-Inc for the affected vertex v2. \RTECNS and \RTECUER cannot eliminate redundant computation on \textit{unaffected subgraphs}.}
 	\label{fig:sec2:comp_amplification2}
    \vspace{-0.1in}
\end{figure}

\subsection{Limitation of Non-incremental Solutions}

\label{sec2:relatedwork}
\Paragraph{\RTEC with Neighborhood Sampling (NS)}
Neighborhood sampling-based \RTEC~\cite{Helios_PPOPP_2025, DGL_ARXIV_2019, GRAPHSAGE_NIPS_2017} reduces computation by randomly dropping a subset of the $L$-hop neighborhoods of affected vertices (Fig.~\ref{fig:sec2:comp_amplification2}.a). However, its effectiveness is highly data-dependent. As shown in Fig.~\ref{fig:sec2:comp_volume}, varying the sampling fanout from 5 to 20 leads to diverse outcomes. On high-degree graphs (e.g., Reddit), sampling can reduce computation below the size of the \textit{affected subgraph} (\textbf{AS}), while on low- and medium-degree graphs, the processed edge volume remains high even with small fanouts. Moreover, random neighbor dropping may cause significant accuracy degradation, since only a subset of the affected subgraph is evaluated. As shown in Section~\ref{sec:expr:accuracy}, small fanouts can yield lower accuracy than periodic recomputation, as the information loss from sampling may outweigh the benefit of fresher graph snapshots.

\Paragraph{\RTEC with unaffected embedding reuse (UER)}
Another approach is to cache and reuse embeddings of unaffected vertices~\cite{Grapher_WWW_2024,HONGTU_SIGMOD_2024}.  However, this vertex-centric approach still performs full-neighbor aggregation for affected vertices, even when only a single incoming edge changes. As illustrated in Figure~\ref{fig:sec2:comp_amplification2}.b, updating the embedding of vertex~2 requires processing seven edges, four of which are unaffected. As shown in Figure~\ref{fig:sec2:comp_volume}, \RTECUER processes 2.5$\times$–12.3$\times$ more edges than the \textit{affected subgraph}, with redundant computation on unaffected edges still accounting for 61\%–91\% of the total. As summarized in Table~\ref{tab:sec2:related}, \RTECNS and \RTECUER either sacrifice inference accuracy or incur significant redundant computation, making them unsuitable for RTEC applications.



\begin{table}
\footnotesize
\vspace{-0.05in}
\renewcommand{\arraystretch}{1}
\setlength{\tabcolsep}{3pt}
\centering
	\caption{Summary of existing \RTEC solutions.}
\vspace{-0.05in}
	\label{tab:sec2:related}
\begin{tabular}{l |l| r | r| r  l  r  r r } 
\hline

\multirow{2}*{Approach}&\multirow{2}*{Baseline} &Redundant & Accuracy& Complex \\ 
&    &Computation & Guarantee & Model Support \\ \hline
Non-&SMP \cite{Helios_PPOPP_2025}&Unstable&$\times $&$\checkmark$\\	
Incremental& UER \cite{Grapher_WWW_2024}&High&$\checkmark $&$\checkmark$\\	
\hline
\multirow{2}*{{Incremental}}&Naive \cite{INKSTREAM_ARXIV_2023,ripple_arxiv_2025}&Low&\textcolor{blue}{$\checkmark$(simple GNN)}&$\times$\\
&\textbf{Our work}&\textbf{Low}&\textbf{$\checkmark $}&\textbf{$\checkmark$}\\	
\hline
\end{tabular}
\vspace{-0.05in} 
\end{table}

\begin{figure}
    \vspace{-0.05in}
	\centering
	\includegraphics[width=1.0\linewidth]{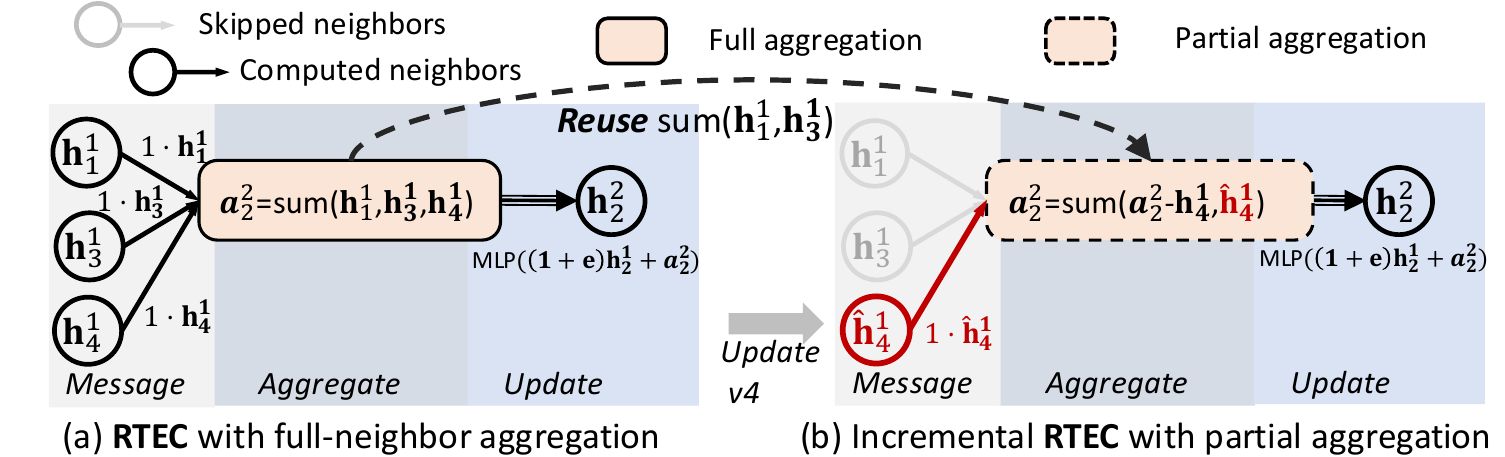}
        \vspace{-0.15in}
	\caption{Full-neighbor \RTEC and incremental \RTEC for the GIN model. \textbf{Message}: $f(x)=x$; \textbf{Aggregate}:\texttt{sum()}; \textbf{update}: \texttt{MLP()}.}
 	\label{fig:sec3:comp_amplification}
    \vspace{-0.23in}
\end{figure}
{
\subsection{Incremental \RTEC: Opportunity and Challenges}

\label{sec2:challenge}
\Paragraph{Opportunity}
Incremental processing reduces redundant computation by reusing historical results from unaffected regions and updating only newly affected parts. Figure~\ref{fig:sec3:comp_amplification} illustrates this intuition using the GIN model~\cite{GCN_ICLR_2017} with \texttt{sum} aggregation and constant edge messages. When vertex $v_4$ is updated, only the message on edge $\langle v_4, v_2 \rangle$ needs to be recomputed as $1\cdot\hat{\textbf{h}}^1_4$. The valid contributions from unaffected edges $\langle v_1, v_2 \rangle$ and $\langle v_3, v_2 \rangle$ are reused by subtracting the outdated message $\textbf{h}^1_4$ from the previous aggregation $\textbf{a}^2_2$ and adding the updated one. The new embedding $\textbf{h}^2_2$ is then obtained by applying the MLP to the updated aggregation. This process recomputes only the affected edge while producing results equivalent to full recomputation, enabled by 1) the associativity of the \texttt{sum} operator ensures \(\texttt{sum}(\textbf{h}^1_1,\textbf{h}^1_3,\hat{\textbf{h}}^1_4) = \texttt{sum}(\texttt{sum}(\textbf{h}^1_1, \textbf{h}^1_3,\textbf{h}^1_4)-\textbf{h}^1_4,\, \hat{\textbf{h}}^1_4),
\), and 2) the validity of unaffected edge contributions (e.g., $1\cdot\textbf{h}^1_1$, $1\cdot\textbf{h}^1_3$).

{

\Paragraph{Overhead Comparison}
Incremental RTEC (RTEC-Inc) restricts propagation to the affected subgraph and can be viewed as performing an $L$-hop broadcast from the updated vertices, with an edge computation volume of $O(d\,|V_{\text{upd}}|\cdot \alpha^{L+1})$, where $d$ is the average degree and $\alpha$ denotes the average number of affected neighborhood per layer; $\alpha$ varies with graph size, degree distribution, skewness, and the distribution of updates. In contrast, \RTECFull reprocesses the $L$-hop neighborhoods of the final-layer affected vertices, performing an additional $L$-hop broadcast over RTEC-Inc and incurring a computation volume of $O(d\,|V_{\text{upd}}|\cdot \alpha^{2L+1})$. \RTECUER avoids recomputing unaffected vertices but still incurs an $(L{+}1)$-hop broadcast with cost $O(d\,|V_{\text{upd}}|\cdot \alpha^{L+2})$. In general, when update batches are small ($|V_{\text{upd}}|$) and $\alpha$ is large, RTEC-Inc achieves orders-of-magnitude lower computation than \RTECFull and \RTECUER. When updates span the entire graph, all three methods degenerate to full-graph computation. Sampling-based methods (\RTECNS) follow a similar $2L$-hop propagation pattern but operate on a sampled subgraph of size $O(d,|V_{\text{upd}}|\cdot \hat{\alpha}^{2L+1})$, where $\hat{\alpha}$ is bounded by the sampling fanout; consequently, on high-degree graphs (e.g., Reddit), \RTECNS can even incur lower computation overhead than RTEC-Inc, as shown in Figure~{8}.

}

{
\Paragraph{Related Work} Despite the advantages, general incremental GNN embedding computation remains highly challenging.
Recent efforts~\cite{INKSTREAM_ARXIV_2023,ripple_arxiv_2025} study incremental GNN embedding computation for models with constant edge message functions and simple aggregation operators (\texttt{min}, \texttt{max}, \texttt{sum}), such as GraphConv, GraphSAGE-sum, and GIN, where recomputation over unaffected regions (e.g., the blue area in Fig.~1) can be avoided. However, their applicability becomes limited when edge message computation depends on neighborhood information. A representative example is GCN: although degree normalization appears constant, it changes dynamically as the graph evolves, causing updates to propagate to all incident edges of a neighbor. Correctly identifying and handling such dependencies across different GNN architectures is non-trivial and cannot be directly supported by existing execution models. As a result, prior systems fall back to full-neighbor recomputation for more complex models (e.g., GCN and GAT) to ensure correctness~\cite{INKSTREAM_ARXIV_2023,ripple_arxiv_2025}. 

\Paragraph{Challenges}
The core challenge arises from the mismatch between the diverse GNN computation patterns and the strict applicability conditions.
Specifically, two requirements are hard to satisfy simultaneously.
First, the neighborhood \textbf{aggregation} operation must be associative, i.e., $f(x,y,z)=f(f(x,y),z)$, to allow new messages to allow new messages to be combined incrementally with historical results. Second, the intermediate embeddings to be reused must remain valid, meaning that messages from unaffected neighborhoods do not need to be recomputed.
These conditions are violated in many common GNN models.
For example, GraphSAGE-mean employs non-associative \kw{mean} aggregation, where
\(\kw{mean}(\textbf{h}_x,\, \textbf{h}_y,\, \textbf{h}_z) \ne \kw{mean}(\kw{mean}(\textbf{h}_x,\, \textbf{h}_y),\, \textbf{h}_z)\). In GCN and GAT, neighbor-dependent contexts such as degree normalization and attention weights are coupled with message computation; updates to these contexts can invalidate all neighbor messages and prevent reuse.

}
}

{
\section{Incremental \RTEC Framework}
In this work, we propose a unified and fine-grained GNN abstraction that decouples neighborhood-wise context computation from message and aggregation operations, enabling unified and correct incremental processing across both simple and complex GNN architectures with theoretical guarantee.
}

\label{sec:frame}
\subsection{Fine-grained Operator Decoupling for Incremental \RTEC}

We formally describe the abstraction as follows.
\small
\begin{align}
\label{eq:decouple}
\Big\{mlc_{uv}&=\textbf{ms\_local}\big(\textbf{h}^{l-1}_u, \textbf{h}^{l-1}_v\big) \mid u\in N(v)\Big\}\\
nct_{v}&=\textbf{nbr\_ctx}\big(\{mlc_{uv}| u\in N(v)\}\big)\\
\Big\{msg_{uv}&=\textbf{ms\_cbn}\big(nct_{v},mlc_{uv} \big) \mid u\in N(v)\Big\}\\
\textbf{a}_v^l&=\textbf{aggregate}(\big\{msg_{uv}\cdot\textbf{f\_nn}(\textbf{h}^{l-1}_u)\mid u\in N(v)\big\}) \\
\textbf{h}_v^l&=\textbf{update}(\textbf{a}_v^l)
\end{align}\normalsize
This abstraction extends the standard \textbf{message-aggregate-update} model by explicitly decoupling the neighbor-wise context computation. Specifically, $\textbf{ms\_local}()$ denotes an edge-wise message function that can be computed independently. $\textbf{nbr\_ctx}()$ represents the neighbor-wise context computation. It takes as input either edge-wise messages or constant values from all neighborhoods and produces a single value representing a neighborhood property. For example, in GCN, the destination vertex degree can be computed by summing a constant value of 1 from each neighboring vertex using \textbf{nbr\_ctx()}. The $\textbf{msg\_cbn}()$ function operates on each edge, combining the edge message with the neighbor-wise context to reproduce the original \textbf{message} semantics. The \textbf{aggregate} and \textbf{update} operation retain the same semantic as in the original form, where the $\textbf{f\_nn}(\textbf{h}^{l-1}_u)$ is a linear transformation function applying the message to the vertex feature, which can be constants or matrix computations. We provide a graphical example in Figure \ref{fig:sec3:condition}.a, where \textbf{f\_nn()} is implicitly included in the aggregation operation.

The key insight of this abstraction is that, by decoupling the neighbor-wise context, the \textbf{ms\_local}() retains only components that can be evaluated independently on each edge, and the aggregation operation can be transformed into an associative form, e.g., replacing \texttt{mean} with \texttt{sum} by separating the in-degree computation. The neighbor-wise context can also be incrementally updated if the \textbf{nbr\_ctx()} operation is associative. Furthermore, if its effect is independent of the local messages and separable from the aggregation operation, changes in the neighbor-wise context can be directly applied to the aggregated result, rather than recomputing all edge messages, via efficient vertex-wise \textbf{msg\_cbn()}. This decoupling enables these components to be reordered to support incremental processing on only affected neighborhoods. 

\begin{figure*}
    \vspace{-0.1in}
	\centering
	\includegraphics[width=1.0\linewidth]{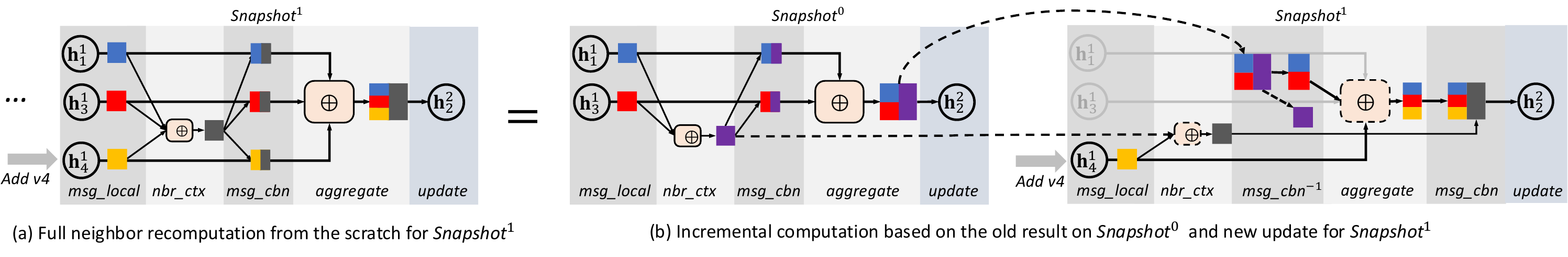}
        \vspace{-0.2in}
	\caption{A graphical illustration of full-neighbor \RTEC and the reordered incremental \RTEC. Colored boxes represent result tensors and their compositions from upstream operations; solid arrows indicate data flow, while dashed arrows denote the reuse paths.}
 	\label{fig:sec3:condition}
    \vspace{-0.10in}
\end{figure*}

{
\begin{algorithm}[t]\small
	\caption{Reordered incremental \RTEC for a single layer.}  
	\label{alg:inc_comp}  
	\begin{algorithmic}[1]
		\Require  Destination vertex $v$; Affected neighborhood $\Delta N(v)$, previous neighbor embedding $\textbf{a}_v^l$; previous neighbor context ${nct}_{v}$.
		\Ensure  New result ${h}^{l}_v$
        \For{each $u$ in $\Delta N(v)$}
        \State $mlc_{uv}=\textbf{ms\_local}\big(\textbf{h}^{l-1}_{u}, \textbf{h}^{l-1}_v\big)$
        \EndFor
        \State $\hat{nct}_{v}=\textbf{nbr\_ctx}({nct}_{v}, \big\{msg\_lc_{uv} \mid u \in \Delta N(v)\big\})$        
        \State $\hat{\textbf{a}}_v^l=\textbf{ms\_cbn}^{-1}\big({nct}_{v},\textbf{a}_v^l\big)$

\State $\hat{\textbf{a}}_v^l=\textbf{aggregate}(\hat{\textbf{a}}_v^l, \big\{mlc_{uv}\cdot \textbf{f\_nn}(\textbf{h}^{l-1}_{u}) \mid u\in \Delta N(v)\big\})$
\State $\textbf{a}_v^l=\textbf{ms\_cbn}\big(\hat{nct}_{v},\hat{\textbf{a}}_v^l,\big)$
\State  $\textbf{h}_v^l=\textbf{update}(\textbf{a}_v^l)$       
	\end{algorithmic} 
\end{algorithm}
}

\subsection{Reordered Incremental \RTEC Workflow}
Algorithm \ref{alg:inc_comp} shows the general form of incremental \RTEC for a single layer. The input includes a target vertex \(v\), the affected neighbors \(\Delta N(v)\), previous aggregated neighbor embedding \(\textbf{a}_v^l\), and the previous neighborhood context \(nct_v\). 
In the first step, \textbf{msg\_local} recomputes local messages $mlc_{uv}$ for affected edges (Line 1-2) and then \textbf{nbr\_ctx} partially computes the neighbor-wise context using the old \(nct_v\) and new local edge messages (Line 3). In the second step, local messages and new neighbor-wise context are separately applied to the neighbor aggregation embedding $\textbf{a}_v^l$. It first uses \textbf{ms\_cbn}()'s inverse operation \textbf{ms\_cbn}()$^{-1}$ to remove the effect of old \(nct_v\) from \(\textbf{a}^l\) (Line 4), then partially aggregates the update of local messages from affected neighborhoods to $\hat{\textbf{a}}^l$ (Line 5), reusing the valid aggregation result of unaffected subgraphs, and finally recalls \textbf{ms\_cbn}() to restore \({\textbf{a}}^l\) using the new neighbor-wise context \(\hat{nct}\) (Line 6). In the third step, the update operation recomputes the new layer-output embedding using the new \({\textbf{a}}^l\) (Line 7). Note that Algorithm~\ref{alg:inc_comp} presents a generalized formulation for different type of affected vertices. For each inserted element in $\Delta N(v)$, \textbf{ms\_local()} adds a positive message; for each deleted element, it produces a negative message to cancel the outdated contribution. Element updates are accomplished through a combination of both operations.

\subsection{Equivalence Analysis}
\label{sec3:condition}

Algorithm~\ref{alg:inc_comp} is equivalent to the original full-neighbor formulation (Equation 5-9). Figure~\ref{fig:sec3:condition} illustrates the key insight behind the equivalence between full-neighbor \RTEC (a) and its incremental counterpart (b). In addition to the associative property that enables the \textbf{nbr\_ctx}() and \textbf{aggregate}() operations to be computed incrementally. The key property enabling the algorithm is that the effects of neighborhood context and local messages are independent and separable from the aggregation result, i.e., \textbf{ms\_cbn}() is distributive over the \textbf{aggregate} operation (Condition 3 in theorem \ref{th:inc}). This allows us to use vertex-centric \(\textbf{msg\_cbn}^{-1}()\) and \(\textbf{msg\_cbn}()\) operations to remove outdated and apply updated neighborhood context to the aggregation result, and to independently accumulate local messages from affected edges (e.g., \(\langle 4, 2 \rangle\)) into the intermediate aggregation state. We formalize the correctness condition in the following theorem.

\begin{theorem}
\label{th:inc}\textbf{Equivalence of full-neighbor \RTEC and incremental \RTEC}
Given a vertex \(v\), its original neighborhood \(N\), an update set \(\Delta N\), and the corresponding vertex embeddings \(\{ \textbf{h}^l_u \mid u \in N \cup \Delta N \}\), the output embedding computed by Algorithm~\ref{alg:inc_comp}, using the incremental update over \(N\) and \(\Delta N\), is equivalent to the embedding obtained by recomputing from scratch using Equations (9)–(13) on the combined neighborhood \(N \cup \Delta N\), provided the following conditions hold. $M$ and $X$ represent a set of edge messages and vertex embeddings from the domain of \textbf{nbr\_ctx} and \textbf{aggregate}, respectively.

\begin{itemize}[leftmargin=*]
    \item (1) $\textbf{nbr\_ctx}(M_l\cup M_r) = \textbf{nbr\_ctx}\big(\textbf{nbr\_ctx}(M_{l}), M_r\big)$
    \item (2) $\textbf{aggregate}(X_l\cup X_r) = \textbf{aggregate}\big(\textbf{aggregate}(X_{l}), X_r\big)$
    \item (3) $\textbf{aggregate}(\{\textbf{msg\_cbn}(z,m)\mid m \in M\})=$\\
    $\textcolor{white}{1}\quad\quad\quad\quad\quad\quad\textbf{msg\_cbn}(z, \textbf{aggregate}(\{m\mid m \in M\}))$
    \item (4) $\forall z_1, z_2\in M, z_1 \le z_2 \Rightarrow \textbf{msg\_cbn}(z_1,m) \preceq \textbf{msg\_cbn}(z_2,m)$
\end{itemize}

\end{theorem}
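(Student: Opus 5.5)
The plan is to trace Algorithm~\ref{alg:inc_comp} line by line and show that, at every stage, its intermediate state matches the corresponding quantity computed from scratch by Equations (9)--(13) on $N\cup\Delta N$. The final \textbf{update} is applied to the same argument in both cases, so it suffices to show that Line~6 yields the from-scratch aggregate $\textbf{a}_v^l$. We treat deletions as signed (negative) messages, as described after Algorithm~\ref{alg:inc_comp}. Thus ``$N\cup\Delta N$'' means the multiset of messages in which each cancellation pair contributes nothing. This is an implicit invertibility assumption on \textbf{aggregate} and \textbf{nbr\_ctx}, and we state it explicitly.

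\textbf{Step 1 (context).} The edge messages $mlc_{uv}$ depend only on the endpoint embeddings. Hence for $u\in N\setminus\Delta N$ they coincide with the stored ones, and for $u\in\Delta N$ Lines~1--2 recompute them. Set $M_l=\{mlc_{uv}:u\in N\}$ and $M_r=\{mlc_{uv}:u\in\Delta N\}$. Since $nct_v=\textbf{nbr\_ctx}(M_l)$, Condition~(1) gives $\hat{nct}_v=\textbf{nbr\_ctx}(M_l\cup M_r)$. This is exactly the from-scratch context.

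\textbf{Step 2 (stripping the old context).} Write $x_u=mlc_{uv}\cdot\textbf{f\_nn}(\textbf{h}^{l-1}_u)$. The stored aggregate is $\textbf{aggregate}(\{\textbf{msg\_cbn}(nct_v,x_u)\})$. By Condition~(3) this equals $\textbf{msg\_cbn}(nct_v,\textbf{aggregate}(\{x_u:u\in N\}))$. Applying $\textbf{msg\_cbn}^{-1}(nct_v,\cdot)$ therefore recovers the context-free aggregate $A_N=\textbf{aggregate}(\{x_u:u\in N\})$.

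This is the step I expect to be the main obstacle. The theorem never explicitly asserts that $\textbf{msg\_cbn}(z,\cdot)$ is invertible, and Condition~(3) requires a common context $z$ for all messages. My approach is to use Condition~(4), the monotonicity of \textbf{msg\_cbn} in $z$, to argue injectivity in the message argument for a fixed context. Combined with the existence of $\textbf{msg\_cbn}^{-1}$, which the algorithm presupposes, this makes Line~4 well defined. I will also observe that monotonicity ensures the ordering used in max/min-type aggregations is preserved after rescaling. That is what keeps Condition~(3) consistent for those aggregations, and it is the reason Condition~(4) is listed.

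\textbf{Step 3 (incremental aggregation and recombination).} Line~5 computes $\textbf{aggregate}(A_N,\{x_u:u\in\Delta N\})$. By Condition~(2) this equals $\textbf{aggregate}(\{x_u:u\in N\cup\Delta N\})$. Line~6 applies $\textbf{msg\_cbn}(\hat{nct}_v,\cdot)$. Using Condition~(3) in the reverse direction, together with $\hat{nct}_v$ being the true context from Step~1, this equals $\textbf{aggregate}(\{\textbf{msg\_cbn}(\hat{nct}_v,x_u):u\in N\cup\Delta N\})$. That is precisely the from-scratch $\textbf{a}_v^l$.

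Applying \textbf{update} completes the proof for one layer. Induction over layers then extends the result to the full $L$-layer computation, provided every vertex whose inputs change is included in $\Delta N$ of its out-neighbors.
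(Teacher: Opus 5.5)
Your proposal is correct and follows essentially the same route as the paper's proof. Condition~(1) yields the new context, Condition~(3) factors the old context out of the stored aggregate so that $\textbf{msg\_cbn}^{-1}$ can strip it, Condition~(2) justifies the partial aggregation over $\Delta N$, and Condition~(3) is used once more to re-apply $\hat{nct}_v$; the only difference is that you trace Algorithm~\ref{alg:inc_comp} forward, whereas the paper starts from the from-scratch $\textbf{a}^{l,[N^+]}_v$ and rewrites it into the form of Lines~4--6.

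One caveat, which the paper shares when it calls Condition~(4) a sufficient condition for the existence of the inverse: monotonicity of $\textbf{msg\_cbn}$ in its context argument gives no injectivity in the message argument. For example, $\textbf{msg\_cbn}(z,m)=z$ with \texttt{max} aggregation satisfies (3) and (4) but has no inverse. So drop that derivation and state the left-inverse property $\textbf{msg\_cbn}^{-1}(z,\textbf{msg\_cbn}(z,y))=y$ as an explicit hypothesis, which is what your argument actually uses.
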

\begin{IEEEproof}\label{proof:correct2}
In the proof, we use the superscript notation \(^{[]}\) to distinguish the input data of neighborhood context and \textbf{aggregate} result. For brevity, we denote \(N \cup \Delta N\) as \(N^+\). Since \textbf{msg\_local()} is edge-wise, full-neighbor \RTEC and incremental \RTEC produce the same output messages. Then, the full neighbor-wise computation (Equation 10) can be converted into the partial computation form in Line 3 using condition (1):

\footnotesize
\begin{align}
\label{proof:nbr}
nct^{[N^+]}_{v}=&\textbf{nbr\_cxt}(\{mlc_{uv} \mid u \in N^+\})\notag\\
=&\textbf{nbr\_cxt}(\textbf{nbr\_cxt}(\{mlc_{uv} \mid u \in N) \cup \{mlc_{uv} \mid u \in \Delta N\}) \notag\\
=&\textbf{nbr\_cxt}(nct^{[N]}_{v}, \{mlc_{uv} \mid u \in \Delta N\})
\end{align}
\normalsize

Condition 2, similar to Condition 1, ensures that partial aggregation in Line 5 yields the same outcome as full aggregation over complete data. Condition 3 indicates that the \textbf{msg\_cbn}() function is distributive over the $\textbf{aggregate}()$ operation. Such that Equations 11-12 can be rewrite as follow:

\footnotesize
\begin{align}
\textbf{a}^{l,[N^+]}_v&=\textbf{aggregate}(\big\{\textbf{ms\_cbn}\big(nct_{v}^{[N^+]},mlc_{uv}\cdot\textbf{f\_nn}(\textbf{h}^{l-1}_u)\big)\mid u\in N^+\big\})\notag\\
&=\textbf{ms\_cbn}\big(nct_{v}^{[N^+]},\textbf{aggregate}(\big\{mlc_{uv}\cdot\textbf{f\_nn}(\textbf{h}^{l-1}_u)\mid u\in N^+\big\})\big)\notag\\
&=\textbf{ms\_cbn}(nct_{v}^{[N^+]},\textbf{aggregate}\big(\notag\\
&\quad\quad\quad\quad\quad\quad\textbf{aggregate}(\big\{mlc_{uv}\cdot\textbf{f\_nn}(\textbf{h}^{l-1}_u)\mid  u\in N\big\})\notag\\&\quad\quad\quad\quad\quad\quad\quad\quad\quad\cup\big\{mlc_{uv}\cdot\textbf{f\_nn}(\textbf{h}^{l-1}_u)\mid  u\in\Delta N\big\}\big)
\end{align}
\normalsize
Condition 4 is a sufficient condition for the existence of the inverse function $\textbf{msg\_cbn}^{-1}$(), with which we can decouple the effect of neighborhood-wise computation from the aggregation result. Such that, the last \textbf{aggregate(...)} in Equation 15 can be rewrote as follows:

\footnotesize
\begin{align}
&\textbf{aggregate}(\big\{mlc_{uv}\cdot\textbf{f\_nn}(\textbf{h}^{l-1}_u)\mid  u\in N\big\})\notag\\
=&\textbf{msg\_cbn}^{-1}\big(nct^{[N]}_{v}, \textbf{msg\_cbn}( nct^{[N]}_{v},\notag\\
&\quad\quad\quad\quad\quad\quad\quad\textbf{aggregate}(\big\{mlc_{uv}\cdot\textbf{f\_nn}(\textbf{h}^{l-1}_u)\mid  u\in U\big\}))\big)\notag\\ 
=&\textbf{msg\_cbn}^{-1}\big(nct^{[N]}_{v},\notag\\
&\quad\quad\textbf{aggregate}(\big\{\textbf{msg\_cbn}\big( nct^{[U]}_{v},mlc_{uv}\cdot\textbf{f\_nn}(\textbf{h}^{l-1}_u)\big)\mid  u\in N\big\}\big)\big)\notag\\
=&\textbf{ms\_cbn}^{-1}\big({nct}^{[N]}_{v},\textbf{a}_v^{l,[N]}\big)
\end{align}
\normalsize
Finally, Equation 15 can be rewrote with Equation 16 as follows:
\footnotesize
\begin{align}
&\textbf{a}^{l,[N^+]}_v=\textbf{ms\_cbn}(nct_{v}^{[N^+]},\textbf{aggregate}\big(\notag\\
&\textbf{ms\_cbn}^{-1}({nct}^{[N]}_{v},\textbf{a}_v^{l,[N]})\cup \big\{mlc_{uv}\cdot\textbf{f\_nn}(\textbf{h}^{l-1}_u)\mid  u\in\Delta N\big\}\big)
\end{align}
\normalsize
This formulation aligns to Lines 4-6 of Algorithm \ref{alg:inc_comp}, with which \textbf{update}() computation produces the correct result. Proved.
\end{IEEEproof}

{
\Paragraph{GNN models with constrained incremental processing}

Theorem~\ref{th:inc} establishes the theoretical foundation for the correctness of incremental processing based on the properties of computational operators. {It implicitly assumes that the destination embedding $\textbf{h}^{l-1}_v$ does not participate in the \textbf{ms\_local()} function in Equation~(5), which ensures that results from unaffected regions remain correct for safe reuse.}. However, in GNN models where the message computation also involves destination embedding (e.g., $\textbf{h}^{l-1}_v$ in GAT), any destination embedding updates can affect the local messages of all its neighbors, leading to incorrect result reuse even when the conditions are satisfied. To guarantee correctness, our method recomputes embeddings for such destination-affected vertices using their full neighborhoods. Importantly, this constraint does not compromise the overall efficiency of incremental processing, as the number of such vertices is significantly smaller than that of vertices requiring incremental updates (Section~\ref{sec:expr:im}).

}

\begin{table*}[!ht]
\vspace{-0.1in}
	\caption{Representative GNN models adaptable to incremental computation. The layer superscript is omitted for brevity. MoNet and CommNet are inherently incremental models, while the remaining ones can be incrementalized with our design.}
	\vspace{-0.05in}
	\label{tab:alg:taxomi}
	\centering
	\scriptsize
	{\renewcommand{\arraystretch}{1.2}
    \setlength{\tabcolsep}{2pt}
	\begin{tabular}{l |r |r |r |r| r |r |r r r r }
		\hline
		
		\hline
		\multirow{2}*{\textbf{Model}} &
		\textbf{msg\_local$(\textbf{h}_u,\textbf{h}_v)$} &
		\textbf{nbr\_ctx$\big(\{mlc_{uv}$$\mid$$u$$\in$$U\}\big)$}  &
		\textbf{msg\_cbn($mlc_{uv},nct_v)$)}&
        \textbf{msg\_cbn$^{-1}$($msg_{uv},nct_v$)}&
        \textbf{aggregate($\cdot$)}&
		\textbf{f\_nn($\textbf{h}_u$)}&
		\textbf{update($\textbf{h}_v, \textbf{a}_v$)} \\
         
        &
        $=mlc_{uv}$ &
		$=nct_{v}$  &
		$=msg_{uv}$ &
        $=mlc_{uv}$ &
        $=\textbf{a}_v$&
		$\Rightarrow\textbf{agg}$&
		$=\textbf{h}_v$ \\
		\hline
        
        MoNet \cite{GNNGOODSURVEY_TPAMI_2024}&
        $\textbf{exp}(\frac{1}{2}(\textbf{h}_u$$-$$w_u)^t W_j(\textbf{h}_u$$-$$w_u))$ &
        1&
        $1\cdot mlc_{uv}$&
        $1\cdot msg_{uv}$&
        $\textbf{sum}(\cdot)$&
        1&
        $\textbf{ReLU}(W\textbf{a}_v)$ \\

        {CommNet \cite{COMMNET_NIPS_2016}} & 
        1 &
        1 &
        $1\cdot mlc_{uv}$&
        $1\cdot msg_{uv}$& 
        $\textbf{sum}(\cdot)$ & 
        $\textbf{h}_{u}$ &
        $W_1\textbf{h}_v+W_2\textbf{a}_v$\\
        \hline

        {GCN\cite{GIN_ICLR_2019}} &
        $\frac{1}{\sqrt{d_u}}$ & 
        $\textbf{count}(\cdot)$ &
        $mlc_{uv}\cdot \frac{1}{\sqrt{nct_v}}$&
        $msg_{uv}\cdot \sqrt{nct_v}$&
        $\textbf{sum}(\cdot)$& 
        $\textbf{h}_u$& 
        $\textbf{ReLU}(W\textbf{a}_v)$ \\
        GraphSAGE\cite{pinterest_kdd_2018} & 
        1& 
        $\textbf{count}(\cdot)$& 
        ${mlc_{uv}}\big/{nct_v}$  &
        $msg_{uv}\cdot nct_{v}$&  
        $\textbf{sum}(\cdot)$ &
        $\textbf{h}_{u}$ &
        $\textbf{ReLU}(W\textbf{a}_v)$ \\

PinSAGE\cite{pinterest_kdd_2018} & 
$\alpha_{u,v}{\sigma}(Q\textbf{h}_{u}+q)$ &
$\textbf{count}(\cdot)$ &
${mlc_{lc}}\big/{nct_v}$ &
$msg_{uv}\cdot nct_{v}$& 
$\textbf{sum}(\cdot)$ &
1 &
$\sigma(W(\textbf{h}_{v} \|\textbf{a}_v))$\\

{RGCN} \cite{RGCN_ESWC_2018}&
         $W_r$  & 
         $\textbf{count}(\cdot)$&
         ${mlc_{lc}}\big/{nct_v}$ &
         $msg_{uv}\cdot nct_{v}$ &
         $\textbf{sum}(\cdot)$ &
         $\textbf{h}_{u}$ &
         $\sigma(W_o\textbf{h}_v+\textbf{a}_v)$\\

	GAT\cite{GAT_ICLR_2018}& 
    $\textbf{exp}\big(\sigma(a([W \textbf{h}_v ||W \textbf{h}_{u})\big)$&
    $\textbf{sum}(\cdot)$ &
    ${mlc_{uv}}\big/{nct_v}$&
    $msg_{uv}\cdot nct_v $ &
    $\textbf{sum}(\cdot)$&
    $W\textbf{h}_u$&
    $\textbf{elu}(\textbf{a}_v)$\\
        
    {G-GCN\cite{GGCN_ICLR_2016} }&
    $\sigma(W_1\textbf{h}_u+W_2\textbf{h}_v)$ &
    1&
    $1\cdot mlc\_{uv}$ &
    $1\cdot msg_{uv}$& 
    $\textbf{sum}(\cdot)$ &
    $\textbf{h}_u$ &
    $\sigma(W\textbf{a}_v)$\\
    
	{A-GNN\cite{OGB_DATASET}}  &
    $w\frac{\textbf{h}_{v}^{T}\cdot\textbf{h}_{u}}{\|\textbf{h}_{v}\| \|\textbf{h}_{u}\|}$&
    1 &
    $1\cdot mlc_{uv}$&
    $1\cdot msg_{uv}$& 
    $\textbf{sum}(\cdot)$&
    $\textbf{h}_{u}$&
    $\sigma(W\textbf{a}_v)$ \\        

    RGAT\cite{RGAT_ARXIV_2019}& 
    $\textbf{exp}\big(\sigma(a_r([W_r \textbf{h}_v ||W_r \textbf{h}_{u})\big)$&
    $\textbf{sum}_{r\in R}(\cdot)$ &
    ${mlc_{uv}}\big/{nct_{v[R(u,v)]}}$&
    $msg_{uv}\cdot nct_{v[R( u,v)]}$ &
    $\textbf{sum}(\cdot)$&
    $W_r\textbf{h}_u$&
    $\sigma(\textbf{a}_v)$\\
    \hline
		
	\end{tabular}
	}
    \vspace{-0.05in}
\end{table*}

\vspace{-0.05in}
\subsection{Application to Various GNN Models}
We summarize ten representative incrementalizable GNN models in Table~\ref{tab:alg:taxomi} and demonstrate its practical application using GAT~\cite{GAT_ICLR_2018} as a representative example.

\begin{algorithm}[t]\small
	\caption{Full-neighbor \RTEC for GAT}  
	\label{alg:gat1}  
	\begin{algorithmic}[1]
		\Require  
         Input vertex $v$; Affected neighborhood $\Delta N(v)$; The new embedding of $\Delta N(v)$: $\{\textbf{h}_{u}^{l-1}\mid u$$\in$$\Delta N(v)\}$.
		\Ensure  New aggregation embedding $\textbf{a}^l_v$ and vertex embedding $\textbf{h}^{l}_v$.
        \For{each $u$ in $N(v) \cup \Delta N(v)$}  \algcomment{//\textbf{msg\_local}}    
        \State $at_{uv}$=$\texttt{exp}\big(\texttt{LeakyReLU}(a^l([W^l \textbf{h}^{l-1}_v ||W^l {\textbf{h}}^{l-1}_{u})\big)$ 
        \EndFor
        \State $at\_sum_v$= $\textbf{\texttt{sum}}\big( \big\{at_{uv}\mid u$$\in$$N(v) \cup \Delta N(v)\big\}\big)$  \algcomment{//\textbf{nbr\_ctx}}
        \For{each $u$ in $N(v) \cup \Delta N(v)$}  
        \State $a\_score_{uv}$=$\frac{at_{uv}}{at\_sum_v}$ \algcomment{//\textbf{msg\_cbn}}    
        \EndFor
        \State ${\textbf{a}}^{l}_v$$=$$\textbf{\texttt{sum}}\big( \big\{a\_score_{uv} W^l\textbf{h}^{l-1}_u |u\in N(v)\cup\Delta N(v) \big\}\big)$  \algcomment{//\textbf{aggregate}}
        \State  $\textbf{h}^{l}_v$ = \func{elu}(${a}^{l}_v$) \algcomment{//\textbf{update}}
	\end{algorithmic} 
\end{algorithm}

\begin{algorithm}[t]\small
	\caption{Incremental \RTEC for GAT}  
	\label{alg:gat2}  
	\begin{algorithmic}[1]
		\Require  
         Input vertex $v$; Affected neighborhood $\Delta N(v)$; The new embedding of $\Delta N(v)$: $\{\textbf{h}_{u}^{l-1}\mid u$$\in$$\Delta N(v)\}$, The old embedding of $\Delta N(v)$: $\{\textbf{h\_old}_{u}^{l-1}\mid u$$\in$$\Delta N(v)\}$; The old neighbor aggregation embedding $\textbf{a}^{l}_v$; The old neighbor-wise context: attention sum $at\_sum_v$;

		\Ensure  New aggregation embedding $\textbf{a}^l_v$ and vertex embedding $\textbf{h}^{l}_v$.
        
        \State $at\_sum\_old_v$ = $at\_sum_v$
        \For{each $u$ in $\Delta N(v)$}  
        \State $at\_old_{uv} =at_{uv}$
        \State $at_{uv}$=$\texttt{exp}\big(\texttt{LeakyReLU}(a^l([W^l \textbf{h}^{l-1}_v ||W^l {\textbf{h}}^{l-1}_{u})\big)$ \algcomment{//\textbf{msg\_local}}
        \EndFor
        \State $at\_sum_v$= $\textbf{\texttt{sum}}\big( at\_sum_v, \big\{at_{uv}$$-$$at\_old_{uv}\mid u$$\in$$\Delta N(v)\big\}\big)$  \algcomment{//\textbf{nbr\_ctx}}
        \State $\hat{\textbf{a}}^{l}_v$ = $\textbf{a}^{l}_v \cdot at\_sum\_old_v$ \algcomment{//\textbf{msg\_cbn}$^{-1}$} 
        \State $\hat{\textbf{a}}^{l}_v$$=$$\textbf{\texttt{sum}}\big( \hat{\textbf{a}}^{l}_v, \big\{at_{uv} W^l\textbf{h}^{l-1}_u$$-$$ at\_old_{uv} W^l\textbf{h\_old}^{l-1}_u$$|u$$\in$$\Delta N(v) \big\}\big)$  \algcomment{//\textbf{agg}}
        \State $\textbf{a}^{l}_v=\frac{\hat{\textbf{a}}^{l}_v}{at\_sum_v}$ \algcomment{//\textbf{msg\_cbn}}
        \State  $\textbf{h}^{l}_v$ = \func{elu}(${a}^{l}_v$) \algcomment{//\textbf{update}}
	\end{algorithmic} 
\end{algorithm}

\Paragraph{Graph Attention Network (GAT)}
The GAT model, as defined in Equations~\ref{eq:gat:message} and~\ref{eq:gat:agg}, computes the parameterized attention value for each edge and applies the \texttt{softmax()} operation across all neighbors to obtain normalized importance scores to distinguish important neighborhoods. Under our fine-grained abstraction, the \texttt{softmax()}-based message computation (Equation~\ref{eq:gat:message}) can be naturally decomposed into three components, as illustrated in Algorithm~\ref{alg:gat1}.  
First, \textbf{msg\_local()} computes the raw attention logits for each edge (Lines 1–2). Then, \textbf{nbr\_ctx()} performs attention summation by aggregating logits across all neighbors (Line 3). Finally, \textbf{msg\_cbn()} normalizes the attention values by dividing each raw score by the summed value (Line 5).  
The aggregation and update steps remain consistent with Equation~\ref{eq:gat:agg}, where \textbf{f\_nn}($\textbf{h}^l_v$) = $W^l\textbf{h}^l_v$. Therefore, the full-neighbor \RTEC execution using Algorithm~\ref{alg:gat1} yields results equivalent to the original formulation in Equations~\ref{eq:gat:message} and~\ref{eq:gat:agg}.

Algorithm~\ref{alg:gat2} outlines the workflow for incremental computation using the decoupled operators in Algorithm~\ref{alg:gat1}. In addition to the basic inputs, it leverages historical neighbor context and aggregated result, i.e., the attention sum $at\_sum_v$ and the neighborhood aggregation embedding $\textbf{a}^l_v$.  
First, the algorithm computes the raw attention value for each edge in the affected neighborhood (Lines 2–4) while preserving the outdated attention values for later correction (Line 3). Then, \textbf{nbr\_ctx()} updates the old attention sum $at\_sum_v$ to the new one by applying the delta value of affected neighborhoods (Line 5).  
Next, the outdated attention sum $at\_sum\_old_v$ is removed from the existing aggregation embedding $\textbf{a}^l_v$ through the multiplication (Line 6), i,e., the inverse operation of division. This is possible because the attention normalization of \textbf{msg\_cbn()} is distributive over summation, i.e., \(\texttt{sum}(\frac{m_1}{at\_sum_v}, \frac{m_2}{at\_sum_v}) = \frac{\texttt{sum}(m_1, m_2)}{at\_sum_v}\). The algorithm then performs a partial update to the intermediate embedding $\hat{\textbf{a}}^l_v$ using the new local attention values and the input embeddings of affected neighbors (Line 7). Finally, the updated aggregation embedding is recomputed using the new attention sum and passed to the \textbf{update} operation (Lines 8–9). The incremental computation yields outputs equivalent to the original GAT formulation.

\Paragraph{More examples} Many commonly used GNN models that are treated as non-incremental or lack accuracy guarantees in existing frameworks~\cite{ripple_arxiv_2025,INKSTREAM_ARXIV_2023} can benefit from incremental \RTEC~\cite{GNNGOODSURVEY_TPAMI_2024}. Table~\ref{tab:alg:taxomi} presents the decomposed formulations of ten representative models. Among them, CommNet and MoNet are naturally compatible with incremental processing due to their use of edge-wise message functions and \texttt{sum} aggregation, whereas the remaining models can be incrementalized with our design. The GCN model can be incrementalized by decomposing the degree normalization into: \(\frac{1}{\sqrt{d_u}}\), \(d_v\), and \(\sqrt{\cdot}\), which correspond to \textbf{msg\_local()}, \textbf{nbr\_ctx()}, and \textbf{msg\_cbn()}, respectively. GraphSAGE and PinSAGE employ the non-associative \texttt{mean()} aggregation, but \texttt{mean()} can be decomposed into a \texttt{sum()} followed by division by the destination vertex’s degree, enabling full incremental processing. In contrast, models such as GAT, G-GCN, and A-GNN incorporate destination vertex embeddings into edge-wise message computation and therefore employ conditional incremental processing. Beyond homogeneous GNNs, heterogeneous models designed to handle graphs with multiple edge types (e.g., GCN and GAT) can also be incrementalized by processing each edge type independently and merging results in the final step. \revision{It is worth noting that \sysname can also support multi-hop aggregation variants of these models~\cite{FLEXGRAPH_EUROSYS_2021}, since the indirect-aggregation across multi-hop neighborhoods can be viewed as adding temporal edges, which violates the condition.}

\section{\textbf\texttt{NeutronRT} System}
\label{sec:system}

{
Incremental RTEC requires caching intermediate embeddings across layers, making full in-GPU processing impractical for large graphs. We introduce \sysname, a CPU–GPU co-processing system that computes embeddings on GPUs while storing intermediate results and graph data in the larger CPU memory. Figure~\ref{fig:sec3:arch} shows an overview.
}

\subsection{Incremental Computation Engine}
\label{sec:comp_engine}
\Paragraph{Backend and operator implementation} \sysname implements the incremental \textbf{\RTEC} engine on top of DGL~\cite{DGL_ARXIV_2019}, a widely adopted graph learning library. It leverages DGL's underlying subgraph representation to maintain the computation graph and its highly optimized GNN operator implementations for efficient GPU execution. \sysname extends DGL's programming interface to support incremental computation operators. For the \textbf{aggregate} and \textbf{update} operations, it directly reuses DGL's native implementations. For message computation, \sysname introduces four new user-defined functions to support the \textbf{msg\_local}(), \textbf{nbr\_ctx}(), \textbf{msg\_cbn}(), and \textbf{msg\_cbn}$^{-1}$() operations as defined in Algorithm~\ref{alg:inc_comp}. We now demonstrate an example using the GAT model:

{\footnotesize
\begin{lstlisting}[caption={GAT Implementation}, label={lst:gat}]
def msg_local(G, mlc_old: Tensor, h_src: Tensor, h_dst: Tensor):
    m = torch.concat(W(h_src), W(h_dst))
    return mlc_old, torch.exp(LeakyReLU(a(m), 0.2))

def nbr_ctx(nct_old: Tensor, mlc: list[Tensor]):
    return nct_old, torch.sum(mlc)

def msg_cbn(nct: Tensor, mlc: Tensor):
    return nct / mlc

def msg_cbn_r(a_dst: Tensor, nct: Tensor):
    return a_dst * nct
\end{lstlisting}
\vspace{-0.1in}
}

\begin{figure}
	\centering
    \vspace{-0.1in}
	\includegraphics[width=1.0\linewidth]{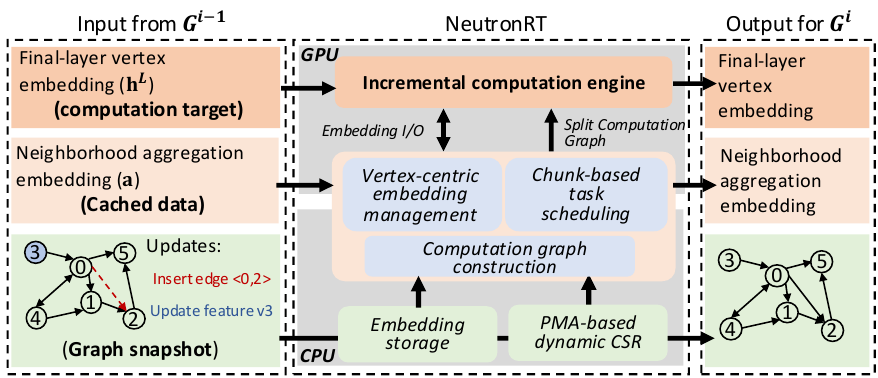}
    \vspace{-0.25in}
	\caption{\sysname system overview.}
    \vspace{-0.13in}
 	\label{fig:sec3:arch}
\end{figure}

{
NeutronRT introduces minimal programming overhead, as most GNN models follow a small set of common patterns (Table~{II}). To apply NeutronRT to a custom GNN model, users first check whether the aggregation function satisfies associativity for incremental accumulation, or can be transformed to do so by removing constant factors. They then examine the edge-level computation. Models with linear and constant edge computations can be handled similarly to the GCN decomposition shown in the third row. More complex models (e.g., GAT, shown in the 7th row) are supported when both aggregation and non-aggregation components can be decomposed into linearly accumulable forms that preserve distributivity with respect to the aggregation. We demonstrate this process using GAT as an example in Listing~{\ref{lst:gat}}.

}

{
\Paragraph{LLM-assisted programming} 
\sysname provides automated operator decomposition and applicability-condition verification tools powered by LLMs and Satisfiability Modulo Theories (SMT) solvers \cite{z3}, thus reducing engineering efforts. Specifically, we construct an external knowledge base that maps GNNs’ original formulations to their implementations in \sysname’s API, along with applicability conditions. This enables users to automatically verify the required conditions and generate incremental programs for any potential GNN model directly from its original formulation via LLMs. To ensure the reliability of LLM-generated programs, \sysname leverages the SMT solvers~\cite{powerlog_sigmod_2020} to detect inconsistencies between incremental and original results of the generated programs, thereby guaranteeing correctness.
}

{
\Paragraph{Computation Graph Construction}
\label{sec:task_scheduling}
The computation graph is constructed via parallel GPU BFS starting from updated vertices and edges, as shown in Algorithm~\ref{alg:affect_region}. At each layer, the affected edge set $E_{{curr}}$ consists of the updated edges and the outgoing edges of affected vertices in the previous layer (Line 3). The destination vertices of $E_{{curr}}$ then form the affected vertex set for the next layer (Line 4). For constrained models (Section~\ref{sec3:condition}), incoming edges of affected destination vertices are additionally included to ensure correctness (Line 5-7). Finally, the resulting per-hop subgraph is appended to the DGL computation graph for execution (Line 8).

\Paragraph{Packed-memory-array-based CSR structure}
To continuously and efficiently support large-scale dynamic graphs, \sysname adopts a CPU-resident packed memory array (PMA)-based dynamic CSR representation~\cite{VCSR_CCGRID_2022} for graph storage. The PMA-CSR stores all vertex neighborhoods in a single packed array, enabling compact storage and efficient neighborhood access to each vertex. Neighborhoods of different vertices are separated by adaptively balanced gaps to accommodate dynamic edge insertions. We refer interested readers to~\cite{VCSR_CCGRID_2022,Grapin_VLDB_2025} for more details.
}

\vspace{-0.04in}
\subsection{Out-of-Memory Embedding Management} 
\label{sec:cpu_gpu_hybrid}
When input features and intermediate embeddings fit in GPU memory, they are fully cached for fast access; otherwise, they remain in CPU memory and are accessed on demand.

\Paragraph{Embedding-centric data migration}
Incremental computation involves sparse vertex embedding accesses, resulting in complex and inefficient CPU–GPU communication. To mitigate this, \sysname adopts the GPU-directed zero-copy memory access to directly read sparsely distributed vertex embeddings from CPU memory with optimized PCIe bandwidth utilization~\cite{PYTORCHDIRECT_VLDB_2021,EMOGI_VLDB_20}. After computation, \sysname group all update embeddings and write them back in parallel. \sysname omits GPU-side intermediate caching, as it offers limited benefit under tight memory constraints.

\Paragraph{Recomputation-based embedding storage optimization} Maintaining both the \textbf{aggregate} output neighbor embedding \(\textbf{a}^l\) and the \textbf{update} output layer embedding \(\textbf{h}^l\) results in doubled CPU memory overhead. We observe that the \textbf{update} computation, transforming \(\textbf{a}^l\) into \(\textbf{h}^l\), typically incurs very little overhead, as it involves only vertex-wise neural network computation. Therefore, to reduce the CPU memory consumption of embedding storage for large graphs, we choose to cache and access only the neighbor embedding \(\textbf{a}^l\), and recompute \(\textbf{h}^l\) on the GPU at runtime.

{
\Paragraph{Out-of-CPU embedding management}
\sysname core design assumes that the intermediate embeddings can be fully cached in CPU memory for efficient incremental processing. When the graph exceeds the memory capacity of a single machine, \sysname can fall back to selectively caching embeddings of high-degree vertices and recomputing the rest on demand. Although this heuristic retains embeddings with higher reuse potential~\cite{DegreeSort_VLDB_2024}, even removing a small fraction (e.g., 10\%) of cached embeddings can lead to over an order-of-magnitude slowdown. This behavior stems from the inherent space-time trade-off of incremental processing. 
}

\begin{algorithm}[t]
\caption{Computation graph construction}
\label{alg:affect_region}
\begin{algorithmic}[1]\small
\Require Computation graph $\{CG_1 ...CG_{L}\}$

\Ensure Computation graph $\{CG_1 ...CG_{L}\}$

\State $V_{curr} \leftarrow V_{upd}$
\For{hop $l \in \{1, \cdots, L\}$} \textbf{in parallel}
    \State $E_{curr}\leftarrow E_{upd} \cup  \{ <u,v> \mid u \in V_{curr}\} $
    \State $\{V_{dst}, V_{src}\}\leftarrow E_{curr}$; $E_{recomp}\leftarrow \emptyset$; $V_{curr}\leftarrow \emptyset$
    \For{each $v \in V_{dst}$} \textbf{in parallel}
    \If{ $\text{constraint\_model}$ \texttt{and} $v \in V_{src}$}
    \State $E_{recomp}\leftarrow E_{recomp}\cup\texttt{inEdges}(v)$
    \EndIf
    \EndFor
    \State $CG \leftarrow$\texttt{construct\_graph}($V_{dst}$,$\{E_{curr}\cup E_{recomp}\}$, l)
\EndFor
\end{algorithmic}
\end{algorithm}

\subsection{Chunked Task Scheduling with Shard Embedding Reuse}
\label{sec4:chunk_scheduling}
 Considering the memory requirements of large computation graphs may exceed the capacity of a single GPU, \sysname adopts a chunked task scheduling approach that partitions each layer's computation graph into smaller chunks that fit within GPU memory. During partitioned processing, a vertex may appear in the neighborhoods of multiple chunks, causing its embedding to be transferred multiple times within the same layer. To reduce such redundant transfers, \sysname incorporates an inter-chunk embedding reuse mechanism~\cite{HONGTU_SIGMOD_2024}.
Specifically, \sysname precomputes neighborhood intersections across adjacency chunks and caches the shared embeddings in an intermediate GPU buffer, enabling reuse across chunks within the same layer. In practice, the chunk size is chosen to be as large as possible within the GPU memory budget, as larger chunks reduce data movement and increase cross-chunk overlap, facilitating embedding reuse.

{

\subsection{On-Demand Embedding Computation}
In some online applications, only a small set of query vertices is requested at a time, requiring their embeddings to be computed on demand. This mode, termed On-Demand Embedding Computation (\texttt{ODEC}), can be viewed as a special case of \RTEC that processes only the K-hop subgraph induced by the queried vertices for serving online queries in real time~\cite{netflix2022personalization,meituan}. From an execution perspective, \texttt{ODEC} similarly incurs redundant computation over unaffected edges within the K-hop subgraph. As the computation logic is unchanged, it can benefit from incremental execution and is efficiently supported by \sysname. Specifically, \sysname constructs the \texttt{ODEC} computation graph by intersecting the affected subgraph with the query-induced K-hop subgraph and executes it using the same task scheduling engine.

}

\section{Experimental Evaluation}
\label{sec:expr}

\label{sec:expr:setup}

\Paragraph{Environments} 
The experiments are conducted on a GPU server with 2 Intel(R) Xeon(R) Silver 4316 CPUs, 512GB DRAM, and one NVIDIA A6000 ($48$GB) GPU connected to the CPU via the PCIe 4.0@32GB/s. The server runs Ubuntu $20.04$ OS with GCC-$9.4.0$, CUDA $11.3$ and PyTorch $1.13.0$.

\begin{table}[!t]
\vspace{-0.1in}
	\caption{Dataset description. $\#\mathbb{F}$, $\#\mathbb{H}$, and $\#\mathbb{L}$ indicate the number of features, hidden dimension, and labels, respectively.}
	\vspace{-0.1in}
	\label{tab:Dataset}
	\centering
	\footnotesize
	{\renewcommand{\arraystretch}{1.1}
    \setlength{\tabcolsep}{2pt}
	\begin{tabular}{l r r c c c c c}
		\hline
		
		\hline
		scale&{\textbf{Dataset}} &
		{\textbf{|V|}} &
		{\textbf{|E|}}  &
		{\textbf{\#$\mathbb{F}$}}&
		{\textbf{\#$\mathbb{H}$}}&
		{\textbf{\#$\mathbb{L}$}}&
		{\textbf{Type}}\\
		\hline
        
		&{ogbn-arxiv \cite{OGB_DATASET} (AX)} & 0.17M & 1.2M &128&256&40& citation\\
		small&{reddit \cite{REDDIT_2017} (RD)} & 0.23M & 114M &602&256&41& post-to-post\\		
		&{ogbn-products \cite{OGB_DATASET} (PT)}  &2.4M& 62M&100 &256&47&co-purchasing\\       
        \hline
        &{Twitter \cite{TWITTER_WWW_2010} (TW)}&	41M &1.5B&128&128&64& social media\\
        
		large&{ogbn-paper \cite{OGB_DATASET} (PR)}& 111M &1.6B&128&128&172&citation network\\

        &{friendster} \cite{FS}(FS)& 65.6M & 2.5B& 128& 128& 64&social media\\

		\hline
		
	\end{tabular}
	}

\vspace{-0.1in}
\end{table}

\begin{figure*}[!t]
\vspace{-0.05in}
	\centering
	\includegraphics[width=\linewidth]{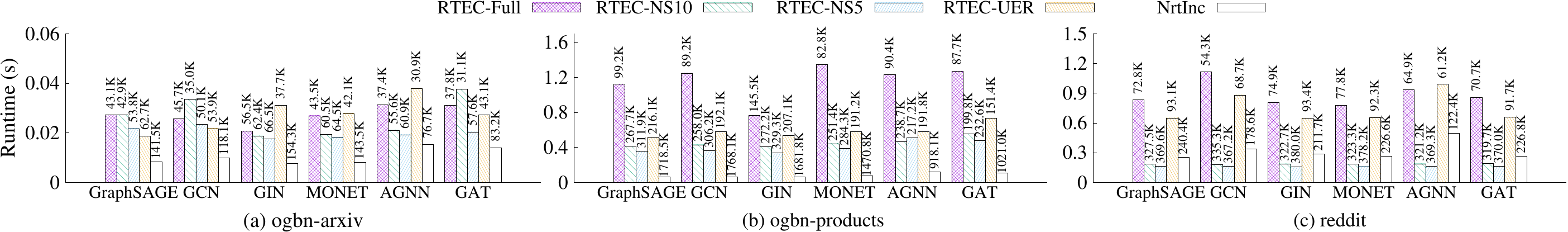}
 	\vspace{-0.2in}
	\caption{Average Response time and throughput (on the top of each bar) comparison across fix models with GPU in-memory processing.}
	\label{fig:sec5:im-overall}
 \vspace{-0.1in}
\end{figure*}

\begin{figure*}[!t] 
\flushleft
\hspace{-0.05in}
\begin{minipage}{.31\linewidth}
    \begin{center}
    \includegraphics[width=\linewidth]{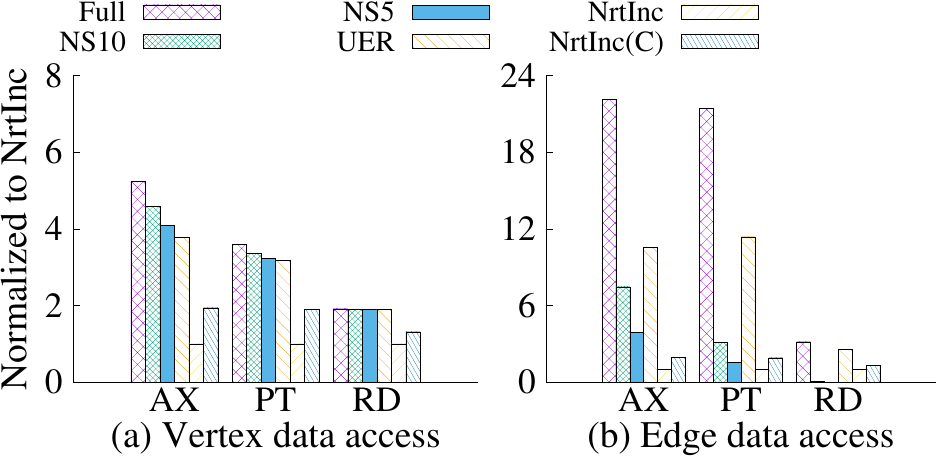}
    \vspace{-0.2in}
    \caption{Data access volume on small graphs.} 
    \label{fig:im_access}
    \end{center}
\end{minipage}
\begin{minipage}{.25\linewidth}
    \begin{center}
    \includegraphics[width=\linewidth]{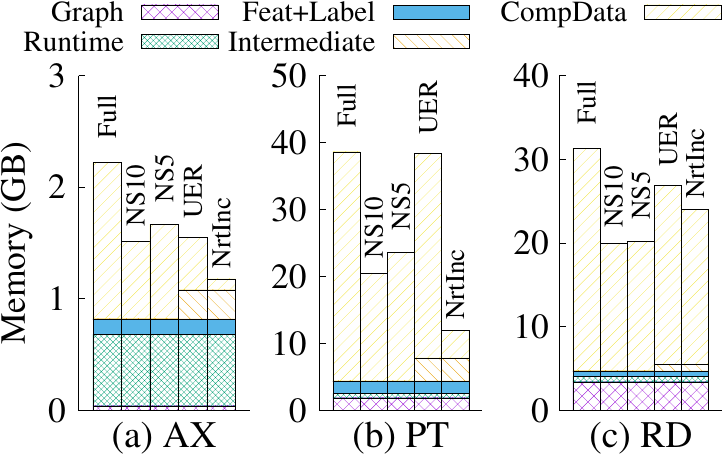}
    \vspace{-0.2in}
    \caption{Memory consumption.} 
    \label{fig:im_m_bd}
    \end{center}
\end{minipage}
\begin{minipage}{.42\linewidth}
    \begin{center}
	\includegraphics[width=\linewidth]{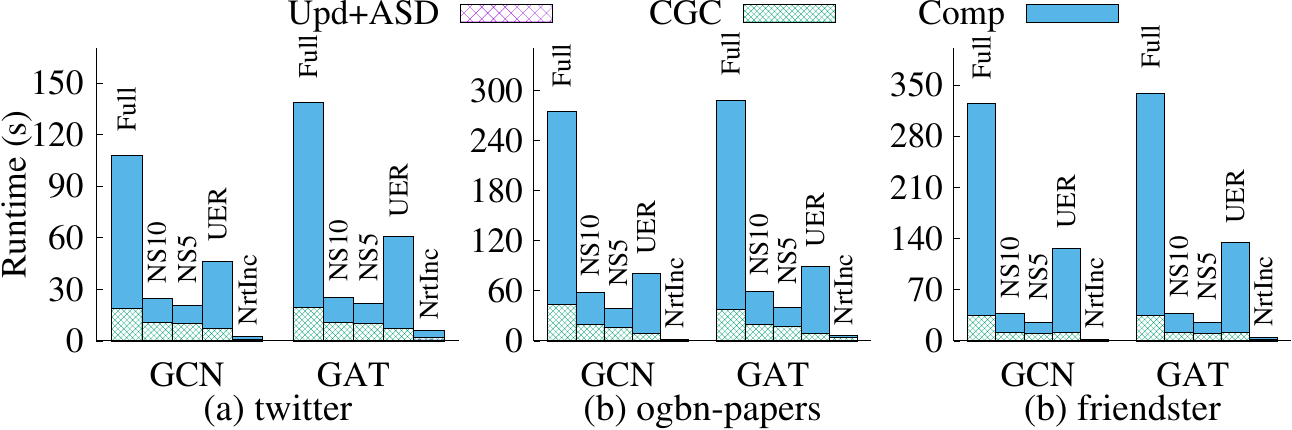}
 	\vspace{-0.2in}
	\caption{Runtime and the breakdown for large graphs.}
	\label{fig:om_p_bd}
    \end{center}
\end{minipage}
 \vspace{-0.2in}
\end{figure*}

\Paragraph{Datasets and workloads}
Table \ref{tab:Dataset} shows the major parameters of the used real-world graphs. Four of them use their attached properties and default training/validation/ test splits. We generate random features, labels, and edge time stamps following \cite{DYNAGRAPH_GRADES_2022} for the Twitter and Friendster graph and randomly allocate 25\%, 25\%, and 50\% of the vertices for training, validation, and testing, respectively. In the experiments, we use edge insertion/deletion hybrid workload~\cite{RisGraph_SIGMOD_2021,REDDIT_2017,OGB_DATASET}. The number of graph updates is controlled using the number of edges, as the power-law property leads to significant variations in edge counts across different vertices. By default, the update batch size is set to 0.01\% of \(|E|\) for small graphs and 0.001\% of \(|E|\) for large graphs.

\Paragraph{GNN models}
We evaluate six models with diverse computation patterns. GCN, GraphSAGE, MoNet, and GIN can be fully incrementalized. AGNN and GAT are constrained incremental models.

{
\Paragraph{Baselines}
We compare \sysname employing incremental \RTEC processing (denoted by \NrtInc) against \RTECFull, \RTECNS \cite{Helios_PPOPP_2025}, and \RTECUER \cite{Grapher_WWW_2024}. $\lambda$Grapher~\cite{Grapher_WWW_2024} optimizes \RTEC on serverless platforms by reusing intermediate results with \RTECUER, while Helios~\cite{Helios_PPOPP_2025} accelerates dynamic graph sampling for \RTECNS on memory–computation decoupled architectures. As both use different deployment settings from ours, we reimplemented their core approaches in \sysname using DGL’s built-in sampling engine and \sysname’s task scheduling mechanism to ensure a fair comparison.  For \RTECNS
We adopt sampling sizes of 5 and 10~\cite{DGL_ARXIV_2019, Helios_PPOPP_2025}.  
The comparison against existing Incremental framework is given in Section \ref{sec:expr:sensitive}. The chunk size for memory-efficient task scheduling is set to 8192 to ensure each chunk fits into GPU memory. In our evaluation, we report only the result of embedding computation and exclude training time, which is performed offline. All reported results are averaged over 20 batches for consistency. We exclude recent model- or channel-pruning-based approximation methods~\cite{REALTIMEGNN_VLDB_2021} from the evaluation, as they are orthogonal to our structure-centric approach. Applying such techniques to both NeutronRT and \RTEC-Full would yield similar computation reductions and accuracy trade-offs.

}

{
\subsection{Accuracy Analysis of Incremental \RTEC}
\label{sec:expr:accuracy}

In this section, we show that incremental \RTEC (\textbf{NrtInc}) matches the accuracy of full \RTEC, outperforms \textbf{RTEC-NS}~\cite{HELIX_VLDB_2018} and \textbf{MTEC-Period}~\cite{meituan,inferturbo_ICDE_2023}, and incurs only minor accuracy loss compared with the theoretically optimal but impractical \textbf{MTEC-Optimal} (real-time model retraining and embedding recomputation).
The experiments are conducted on three real-world graphs using node classification tasks\footnote{Reddit and ogbn-arxiv include native timestamps, while ogbn-products adopts synthesized timestamps following~\cite{DYNAGRAPH_GRADES_2022}. All experiments employ the GraphSAGE model and DGL's default training configuration (200 epochs, 2 layers, a batch size of 2048, and a sampling fanout of 10). 
For the \textbf{MTEC-Period} method, both training and inference are conducted on 90\% of the old graph. The \textbf{MTEC-Optimal} trains and infers on the updated graph. In contrast, \RTEC trains on 90\% of the old graph and performs inference on the updated graph. 
}.
}

{

\color{blue}
\begin{table}
\footnotesize
\vspace{-0.05in}
\renewcommand{\arraystretch}{1.1}
\setlength{\tabcolsep}{0.6pt}
	\caption{Accuracy comparison of different GNN Inference approaches with a 2-layer GraphSage on three real graphs.}
\vspace{-0.05in}
	\label{tab:sec2:motivation}
\begin{tabular}{l |r| r | r| r l  r  r r } 
\hline

\multirow{2}*{Approach}& \multicolumn{4}{|c}{ \textbf{Accuracy comparison of 5 runs}}\\
\cline{2-5}
 &ogbn-arxiv & ogbn-product & reddit & ogbn-paper \\ \hline
\textbf{MTEC-Optimal}&\best{68.57\%$\pm$0.05}\%	& \secondbest{72.84\%$\pm$0.06\%}	&\best{96.68\%$\pm$0.03\%} &\best{65.03\%$\pm$0.06\%}\\

\textbf{MTEC-Period} &67.10\%$\pm$0.15\%&72.28\%$\pm$0.07\%	& 94.73\%$\pm$0.04\% & {63.37\%$\pm$0.01\%} \\
\hline

\multirow{1}*{\textbf{RTEC-NS5}}&66.70\%$\pm$0.11\%	&71.23\%$\pm$0.07\% &	95.47\%$\pm$0.16\% &{63.12\%$\pm$0.10\%}  	\\

 \multirow{1}*{\textbf{RTEC-NS10}}   &67.74\%$\pm$0.09\%
 &72.30\%$\pm$0.09\% &	95.92\%$\pm$0.20\% 	&{63.67\%$\pm$0.14\%} \\
 
\multirow{1}*{\textbf{RTEC-NS20}}    &{67.91\%$\pm$0.11\%} & {72.72\%$\pm$0.08\%} &{96.21\%$\pm$0.11\%} &{64.43\%$\pm$0.08\%} 	\\
\hline

\textbf{RTEC(\NrtInc)}$^1$ & \secondbest{68.20\%$\pm$0.11\%} &	\best{72.86\%$\pm$0.08\%}	&\secondbest{96.58\%$\pm$0.03\%} &\secondbest{64.88\%$\pm$0.04\%} \\
\hline
\end{tabular}

\vspace{0.03in}
{
[1] \RTEC and \NrtInc achieve nearly identical results across GNN models. The mean squared error (MSE) between the final-layer embeddings produced by RTEC-Inc and RTEC-Full is below $10^{-4}$. We therefore report only the GraphSAGE results as a representative case, using a single row.

}
\vspace{-0.1in} 
\end{table}
}

\Paragraph{Accuracy Results} 
As shown in Table~\ref{tab:sec2:motivation}, \textbf{MTEC-Period} can lag behind the \textbf{MTEC-Optimal} solution by up to 2\%, which may affect hundreds of thousands of users in large-scale applications. In contrast, \textbf{NrtInc} improves accuracy over \textbf{MTEC-Period} by 1.18\% on average, while keeping the gap to the optimal solution within 0.15\%. These results indicate that embedding quality is more sensitive to graph structure changes than to model freshness.
Sampling-based methods perform even worse than \textbf{MTEC-Period} with a fanout of 5, as the discarded subgraph is much larger than the actual update region. Although increasing the fanout improves accuracy, performance remains inferior to incremental \RTEC even at a fanout of 20, while also incurring significantly lower computation efficiency (Section~\ref{sec:expr:im}).

{

\vspace{-0.05in}
\subsection{Performance of In-Memory Processing}
\label{sec:expr:im}
In this section, we compare \sysname (\NrtInc) against all baselines with three small graphs and six GNN models. The intermediate embedding is maintained in the GPU.

{
\Paragraph{Runtime and throughput results}
Figure~\ref{fig:sec5:im-overall} reports the response time per update batch under different configurations. On arXiv and Products graphs, {\NrtInc} achieves speedups ranging from 2.6x to 3.3x (10.0x to 19.8x), 1.9x to 3.1x (3.8x to 6.9x), 1.7x to 3.4x (4.2x to 5.8x), and 2.2x to 4.1x (4.8x to 9.2x) over \RTECFull, \RTECNSTen, \RTECNSFive, and \RTECUER, respectively. On the Reddit graph, \NrtInc outperforms \RTECFull by 1.9x-3.3x  \RTECUER by 2.0x-2.6x, but is 0.3x to 0.7x slower than \RTECNS. The throughput, shown above each bar, is computed as $\frac{|updated\_edges|}{response\_time}$, which is inversely proportional to response time. Under the default setting (0.01\% edges per batch), \sysname achieves 76.7K–154.3K, 918.1K–1768.1K, and 178.6K–378.2K edge updates per second on the three graphs, respectively.
}

\Paragraph{Access volume reduction}
Figure~\ref{fig:im_access} compares vertex and edge access volumes across different approaches. All models exhibit similar access patterns, except constrained models (e.g., GAT and AGNN), which incur slightly higher accesses, denoted by \NrtIncC. Overall, \NrtInc consistently reduces access volumes compared to \RTECFull. Compared to \RTECNS, \NrtInc incurs higher edge accesses but significantly reduces vertex accesses. On Reddit, the high average degree and small vertex set cause sampled subgraphs to cover most vertices, diminishing the effectiveness of sampling. Importantly, reductions in edge and vertex accesses are not always aligned, and vertex access volume has a non-negligible impact on compute efficiency. As a result, although \RTECNS reduces edge accesses by up to 97\% relative to \NrtInc, its speedup is limited to at most 3$\times$. In contrast, \NrtInc reduces both vertex and edge accesses, leading to consistent and substantial performance improvements without compromising accuracy.

\Paragraph{Performance of constrained incremental models}
For constrained incremental models (AGNN and GAT), the overhead is generally less than twice that of pure incremental processing, as the recomputed subgraph shares the same structure with the incremental computation graph and contains duplicated vertices and edges. As shown by the \NrtIncC bar in Figure~\ref{fig:im_access}, the recomputation introduces 31\%–94\% more vertex accesses and 33\%–94\% more edge accesses, resulting in a 1.2x–1.7x slowdown compared to \NrtInc. Nevertheless, the average speedup of \NrtIncC over \RTECFull remains substantial, ranging from 2.1x to 10.9x across three graphs.

{
\Paragraph{Memory consumption}
\sysname caches all intermediate embeddings during computation, but this will not increase the memory consumption. Figure~\ref{fig:im_m_bd} shows the total memory usage under in-memory processing (collected via Nsight) along with the breakdown across different components. The \textbf{CompData} component includes the affected subgraph, its associated features, embeddings, intermediate tensors, and GPU memory retained by the Python runtime. We observe that the intermediate result cache of \NrtInc (\textbf{Intermediate}) accounts for 3\% (RD) to 28\% (PT) of the total runtime memory. This is much smaller than the memory saved by reducing the computation graph. Across the three graphs, total memory usage is reduced by 23\% to 69\% (avg. 42\%).
}

\subsection{Performance of Out-of-Memory Processing}

In this section, we evaluate GCN and GAT model on three large graphs, as other models exhibit similar behaviors.

\Paragraph{Runtime and throughput comparison}
The runtime results are presented in Figure~\ref{fig:om_p_bd}. We observe that the processing time of \RTECFull and \RTECUER ranges from 107.8s to 324.3s and from 46.4s to 127.2s, respectively. Although \RTECNS effectively reduces computation, its runtime still exceeds 20 seconds. In contrast, \sysname reduces the per-batch processing time to just a few seconds (2.2s–2.9s). This enables \sysname to effectively handle scenarios with frequent updates on large-scale graphs, while incurring no correctness issues like those found in sampling-based methods (Section~\ref{fig:sec2:comp_amplification2}). Specifically, for the GCN (and GAT) models, \NrtInc achieves speedups ranging from 37.8x to 145.8x (22.7x to 76.8x), 8.6x to 26.1x (4.2x to 8.6x), 7.3x to 17.1x (3.6x to 6.6x), and 16.3x to 57.1x (10.0x to 30.4x) over \RTECFull, \RTECNSTen, \RTECNSFive, and \RTECUER, respectively. GAT incurs higher computation overhead than GCN, primarily due to increased computation and data transfer for the recomputation part. In out-of-memory setting, \sysname achieves throughput ranging from 681.8K–872.5K edge upd/s among the three large graphs.

\Paragraph{Performance Breakdown}
Figure~\ref{fig:om_p_bd} presents the performance breakdown for the three large graphs. In the out-of-memory configuration, computation becomes the dominant bottleneck, accounting for 82\% to 89\% of the total runtime in the \RTECFull setup. This overhead stems not only from the increased volume of graph accesses but also from additional scheduling and engine launch costs introduced by chunked graph processing. Computation graph construction contributes a further 11\% to 18\% of the total runtime, while the overhead of graph updates and affected region detection is negligible. The \RTECNS method effectively reduces computation through neighborhood dropping. However, its benefit in reducing communication is limited, and its overall computation cost remains higher than that of the incremental approach. In contrast, \NrtInc significantly reduces both computation graph construction and overall computation costs, by 90\%–98\% and 95\%–99\%, respectively.  With the \NrtInc configuration, Upd+ASD, CGC, and Comp account for 4\%–18\% (average 10\%), 16\%–58\% (average 36\%), and 37\%–68\% (average 53\%) of the total execution time, respectively.

\begin{figure}[!t] 
\flushleft
\hspace{-0.05in}
\begin{minipage}{.58\linewidth}
    \begin{center}
    \includegraphics[width=\linewidth]{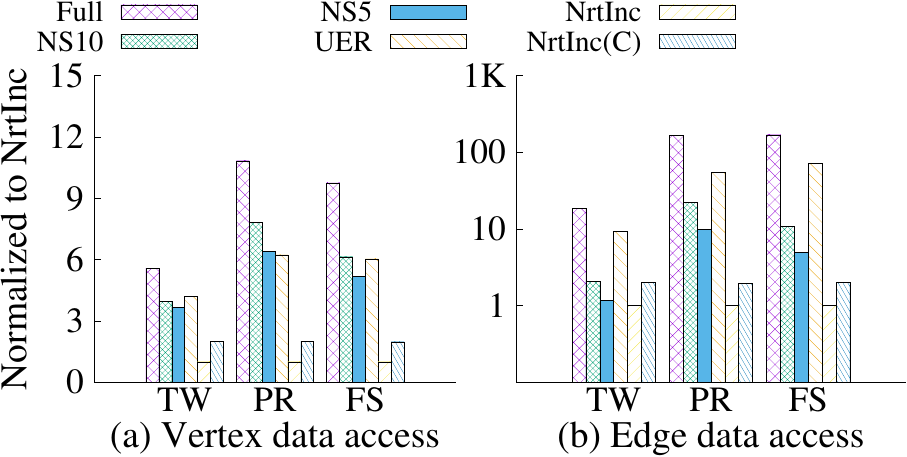}
    \vspace{-0.2in}
    \caption{Access volume on large graphs.} 
    \label{fig:om_access}
    \end{center}
\end{minipage}
\begin{minipage}{.4\linewidth}
\captionof{table}{Edge-access reduction over high-, mid-, and low-degree vertices.}
	\label{tab:expr:access_breakdown}
	\centering
	\footnotesize
	\renewcommand{\arraystretch}{1.0}
\setlength{\tabcolsep}{1.5pt}
	\begin{tabular}{r |r |r |r}			
		\hline
		&{Top20\%}&
		{Mid30\%}&
		{Bot50\%}\\	
        \hline
        TW&85.0\%& 9.3\%& 5.7\%\\
        FS&79.2\%& 17.2\%& 3.6\%\\
        PR&71.9\%& 18.5\%&9.6\%\\
		\hline
	\end{tabular}
\end{minipage}
\end{figure}

{
\Paragraph{Access volume reduction}
As shown in Figure~\ref{fig:om_access}, we extend the redundancy analysis to large-scale graphs and observe that \NrtInc consistently achieves substantial reductions in both vertex and edge access volumes compared to existing methods. On billion-scale datasets, \NrtInc reduces vertex accesses by $3.6\times$–$10.8\times$ and edge accesses by $4.9\times$–$168.5\times$. On the Twitter (TW) dataset, sampling-based methods achieve edge access volumes comparable to \NrtInc. This behavior is mainly attributed to the stronger power-law property of TW, where a small fraction of hub vertices with extremely high degrees dominate neighborhood expansion, causing the incremental computation subgraph to grow rapidly even for small updates. Table~\ref{tab:expr:access_breakdown} further shows a finer-grained access volume breakdown by vertex degree percentiles. Across all datasets, high-degree vertices account for 72\%–85\% of the total access reduction, while the bottom 50\% contribute less than 10\% despite comprising half of the vertices. This result indicates that the effectiveness of \NrtInc primarily arises from eliminating redundant edge accesses around high-degree vertices.
}

{
\begin{table}[t]
	\caption{CPU memory consumption for large graphs (GB).}
	\vspace{-0.1in}
	\label{tab:expr:mem}
	\footnotesize
	\renewcommand{\arraystretch}{1.0}
\setlength{\tabcolsep}{3pt}
	\begin{tabular}{r r r| r r r| r r r}		
		\hline
		
		\multicolumn{3}{c|}{twitter}&
        \multicolumn{3}{c|}{friendser}&
        \multicolumn{3}{c}{ogbn-paper}\\
		\hline
		{Full}&
		{Inc-Naive}&
		{Inc}&
		{Full}&
		{Inc-Naive}&
		{Inc}&
		{Full}&
		{Inc-Naive}&
		{Inc}
        \\
		\hline

        20.99& 83.97& 62.98&
        56.83& 227.32& 170.4&
        33.58& 134.35&100.76
        \\
		\hline
	\end{tabular}
	\vspace{-0.2in}
\end{table}

\Paragraph{CPU memory consumption for large graphs}
As shown in Table~\ref{tab:expr:mem}. \RTECFull and \RTECNS only store the original features, resulting in the lowest memory usage. However, for large-scale graphs, even this can exceed the capacity of a single GPU. The naive \NrtInc stores both the intermediate neighborhood aggregation and vertex embeddings alongside the original feature, leading to a 2.3x–8.2x increase in memory usage. In contrast, the recomputation-based optimization reduces the CPU memory consumption by 19\%–30\% with negligible recomputation cost (smaller than 1\%).
}

\begin{figure*}[t]
	\centering
	\includegraphics[width=\linewidth]{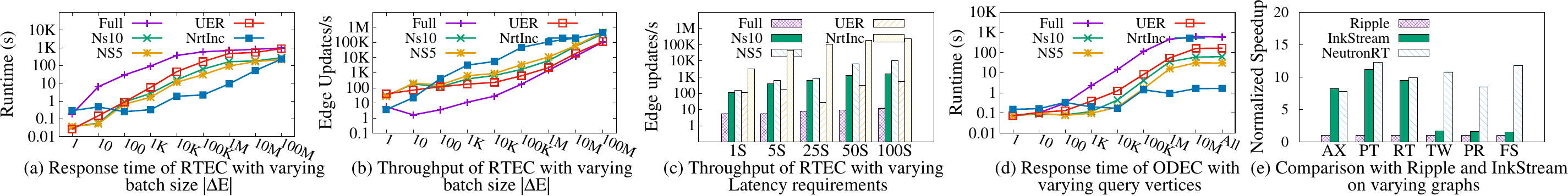}
\vspace{-0.2in}
	\caption{Performance incremental RTEC and ODEC with varying factors.}
	\label{fig:sec5:varying}
    \vspace{-0.15in}
\end{figure*}

\subsection{Sensitivity Analysis}
\label{sec:expr:sensitive}
{
\Paragraph{Performance of RTEC with varying batch sizes ($|\Delta E|$)}
Figure~\ref{fig:sec5:varying}.a–b show the response time and throughput on the ogbn-Paper graph as the number of edge updates $|\Delta E|$ increases from 1 to 100M. When $|\Delta E| < 10$, \NrtInc exhibits slightly lower performance than other solutions due to the overhead of constructing the \textit{affected subgraph}. As $|\Delta E|$ increases, the runtime and throughput of \NrtInc gradually approach those of \RTECFull. Notably, \texttt{NrtInc}’s runtime increases slowly, while its throughput grows rapidly. The performance advantage of \NrtInc becomes significant with moderate update sizes, peaking at a 285.0x speedup over \RTECFull when $|\Delta E| = 1\text{K}$. Beyond this point, the advantage gradually declines, dropping to 3.8x at $|\Delta E| = 100\text{M}$, where the updates account for approximately 16\% of the original edge set. At this scale, \RTECFull almost needs to recompute the entire graph, and \RTECNSFive achieves similar performance to \NrtInc. However, \RTECNSFive suffers from information loss and accuracy degradation, as many affected graph components are omitted during recomputation. \NrtInc maintains stable improvement across varying batch sizes. 
}

{
\Paragraph{Throughput with various latency requirements}
Real-world applications require high throughput (updates/sec) while satisfying latency constraints, such that the updated embeddings are visible to downstream applications within a bounded time. However, practical GNN embedding computation latency may vary significantly under the same batch size due to irregular update propagation and varying neighborhood sizes. We therefore estimate a stable and achievable throughput empirically. For each latency requirement (1\,s, 5\,s, 25\,s, 50\,s, and 100\,s), we gradually increase the batch size from 1 to 10M and evaluate each batch configuration using 20 randomly sampled edge update sets. We select the largest batch size for which all runs satisfy the latency bound and report the corresponding minimal achievable throughput. As shown in Figure~\ref{fig:sec5:varying}.c, \NrtInc consistently achieves orders-of-magnitude higher throughput than competing methods, ranging from 3K edge/s under a 1\,s latency bound to 229K edges/s under a 100\,s latency bound.
}

\Paragraph{Performance of \ODEC with varying query vertices}
{
In real-world \ODEC applications, the number of query vertices ($|V_Q|$) varies dynamically. To evaluate incremental \ODEC under different query sizes, we vary $|V_Q|$ from 1 to the full set of affected vertices (Figure~\ref{fig:sec5:varying}.d), considering an extreme case where all queries are drawn from the affected set. Since \NrtInc incurs no overhead on unaffected vertices, we exclude other baselines in this setting. When $|V_Q|<100$, all methods show similar performance due to the small computation graphs. As $|V_Q|$ increases, the advantage of \NrtInc becomes more pronounced. When all affected vertices are queried (\textbf{$ALL$}), \ODEC reduces to \RTEC and achieves the same performance gains. Although the runtime of \NrtInc does not increase monotonically with $|V_Q|$ due to varying overlap with affected subgraphs, it consistently delivers competitive performance when $|V_Q|\geq 1$K.

}

\begin{table}[h]
	\caption{Performance of {\NrtInc} with various layers.}
	\vspace{-0.1in}
	\label{tab:expr:layers}
	\centering
    \setlength{\tabcolsep}{5pt}
	\footnotesize
	\renewcommand{\arraystretch}{1.0}
	\begin{tabular}{l |r r| r r |r r |r r r}
		
		\hline
		\multirow{2}*{\textbf{Data}}&
		\multicolumn{8}{c}{Normalized Speedup}\\
		\cline{2-9}
		&\multicolumn{2}{c|}{{\texttt{Full}}} &
		\multicolumn{2}{c|}{\texttt{NS5}}  &
		\multicolumn{2}{c|}{\texttt{NS10}}&
		\multicolumn{2}{c}{\texttt{UER}}\\
        \hline
        \textbf{Layer}&2&3&2&3&2&3&2&3\\   
		\hline
		{\textbf{Reddit}}&
        3.3x& 1.8&  
        0.5x& 0.3x& 
        0.5x& 0.3x&
        2.6x& 1.5x\\
\hline
        
		{\textbf{Product}}& 
        8.4x& 3.5x&
        4.5x& 2.2x& 
        4.2x& 2.0x&
        4.6x& 2.4x\\
\hline
{\textbf{Paper}}& 
122.6x	& 20.4x	&
26.2x	& 5.3x	&
17.1x	& 3.4x  &	
35.9x   & 6.6x  \\
		\hline
	\end{tabular}
	\vspace{-0.1in}
\end{table}

{
\Paragraph{Performance with varying layers}
Table~\ref{tab:expr:layers} reports the performance improvement of \NrtInc over other baselines. As the number of GNN layers increases from 2 to 3, the speedup generally decreases, since the size of the \textit{affected subgraph} expands rapidly with depth. This trend is consistently observed across graphs with diverse structural properties, including ogbn-products (power-law), Reddit (high average degree), and ogbn-papers (large-scale, power-law). On dense graphs such as Reddit, sampling-based methods become more competitive at larger depths by uniformly reducing computation at each layer. In practice, GNN models are typically configured with 2 or 3 layers, as deeper models not only incur higher training costs~\cite{BG3_SIGMOD_2024,Helios_PPOPP_2025}, but also suffer from accuracy degradation due to over-smoothing~\cite{oversmooth_arxiv_2023}. Despite the reduced speedup at greater depths, the performance gains accumulated in earlier layers remain substantial, allowing \NrtInc to consistently outperform RTEC-Full and RTEC-UER.

}

{
\Paragraph{Comparison with InkStream and Ripple}
We compare NeutronRT with InkStream~\cite{INKSTREAM_ARXIV_2023}, a CPU–GPU hybrid incremental RTEC system, and Ripple~\cite{ripple_arxiv_2025}, a CPU-based RTEC system. Due to the lack of a public implementation, we implement Ripple's execution logic using DGL’s CPU backend in NeutronRT. To ensure a fair comparison, we focus on the GraphSAGE model, which is supported by all three systems. More complex models, such as GCN and GAT are excluded because they are not supported by the computation model of either Ripple or InkStream. Accuracy evaluation are omitted due to space constraints, as all three systems produce similar results. As shown in Figure~\ref{fig:sec5:varying}.e, NeutronRT and InkStream achieve similar performance on small graphs where all data fits in memory, with speedups ranging from 7.8$\times$ to 12.3$\times$ over Ripple. On large graphs, NeutronRT outperforms InkStream by 5.3$\times$–7.7$\times$, because InkStream executes graph propagation on the CPU, whereas NeutronRT performs the entire computation on the GPU.

}

\section{Related Work}
\label{sec:related}
\Paragraph{GNNs for dynamic Graphs}
Temporal Graph Learning focuses on capturing temporal dependencies and structural evolution in dynamic graphs, typically modeled as either discrete-time snapshots or continuous event streams~\cite{DyGNN_SIGIR_2020, DyRep_ICLR_2019, TE_DyGE_DASFAA_2023, TGL_VLDB_2022,Zebra_VLDB_2023}.  
They generally employ encoder-decoder framework \cite{DCRNN_ICLR_2018} and recursive neural networks with various temporal-spatial encoding approaches~\cite{ROLAND_KDD_2022,TGL_VLDB_2022}. 

\Paragraph{Incremental graph processing}
Recent work on incremental graph processing explores fixed-point semantics and monotonicity in iterative graph analytics to avoid computation on converged vertices~\cite{vora2017kickstarter, mariappan2019graphbolt, gong2021ingress, iTurboGraph}.
However, these techniques do not naturally extend to GNNs, as the redundancy in GNNs primarily stems from accessing unaffected edges, rather than recomputing already converged vertices. Furthermore, GNN computations generally lack monotonicity due to the non-linear nature of neural networks.

\section{Conclusion}

We present \sysname, an efficient and general framework for incrementalizing GNN \RTEC on streaming graphs. By fully reusing historical computation results, \sysname transforms expensive full-neighbor propagation into an equivalent yet far more efficient incremental form through fine-grained operator decoupling and reorganization. This design enables broad generalization across diverse GNN models while rigorously preserving correctness. A lightweight CPU–GPU co-processing framwwork further enhances computation efficiency on billion-scale graphs. Extensive evaluations demonstrate that \sysname delivers substantial speedups over existing \RTEC solutions without compromising model accuracy, making real-time GNN inference both practical and reliable.

\section*{Acknowledgements}
We thank the anonymous reviewers for their constructive comments and suggestions. This work is supported in part by the Ministry of Education AcRF Tier 2 grant, Singapore (T2EP20224-0038), National Natural Science Foundation of China (U2241212, 62461146205), Distinguished Youth Foundation of Liaoning Province (2024021148-JH3/501)

\bibliographystyle{abbrv}
\bibliography{main.bib}  

\end{document}